\documentclass[12pt]{article}
\usepackage{a4wide}

\usepackage{graphicx} 
\usepackage{authblk}

\usepackage[breaklinks]{hyperref}
\usepackage{booktabs}
%% Author_tex.tex
%% V1.0
%% 2012/13/12
%% developed by Techset
%%
%% This file describes the coding for rsproca.cls

%\documentclass[a4]{rsproca}%%%%where rsproca is the template name

%\renewcommand{\thesubsection}{\arabic{section}.\alph{subsection}}

\usepackage{tikz}
\usetikzlibrary{matrix,arrows}%,shapes}

\usepackage{amsmath,amssymb,amsfonts,amsthm}%,mathbbol} %amscd
\usepackage{bbding}
\usepackage{extarrows}

\usepackage{mathrsfs,latexsym}
\usepackage[mathscr]{eucal}
\usepackage{mathrsfs}

%%%% *** Do not adjust lengths that control margins, column widths, etc. ***

%%%%%%%%%%% Defining Enunciations  %%%%%%%%%%%
\newtheorem{theorem}{\bf Theorem}[section]
\newtheorem{definition}[theorem]{\bf Definition}%[section]
%[section]
%[section]

%%%%%%%%%%%%%%%%%%%%%%%%%%%%%%%%%%%%%%%%%%%%%%%

\newcommand{\II}{{\boldsymbol{1}}}

\newcommand{\CC}{{\mathbb C}}

\newcommand{\RR}{{\mathbb R}}

\newcommand{\ZZ}{{\mathbb Z}}

% Various classes of smooth compactly supported functions           

\newcommand{\CoinX}[1]{C_0^\infty({#1})}

% Sanserif

% Calligraphic characters (N.B. to use rsfs use \mathscr)

\newcommand{\DD}{{\mathscr D}}

\newcommand{\HH}{{\mathscr H}}

\newcommand{\LL}{{\mathcal L}}

\newcommand{\Ac}{{\mathcal A}}
\newcommand{\Bc}{{\mathcal B}}

%\newcommand{\Ff}{{\cal F}}
%\newcommand{\Ss}{{\cal S}}

% Boldface characters

\newcommand{\gb}{{\boldsymbol{g}}}
\newcommand{\hb}{{\boldsymbol{h}}}
\newcommand{\jb}{{\boldsymbol{j}}}

% Miscellaneous

 %%% integer part

\newcommand{\ogth}{{\mathfrak o}}
\newcommand{\tgth}{{\mathfrak t}}

\newcommand{\supp}{{\rm supp}\,}

\newcommand{\dvol}{d\textrm{vol}}

\newcommand{\ip}[2]{{\langle #1\mid #2\rangle}}

\newcommand{\WF}{{\rm WF}\,}

\newcommand{\Ob}{{\boldsymbol{0}}}

\newcommand{\Jb}{{\boldsymbol{J}}}
\newcommand{\Lb}{{\boldsymbol{L}}}
\newcommand{\Mb}{{\boldsymbol{M}}}
\newcommand{\Nb}{{\boldsymbol{N}}}

\newcommand{\Mc}{{\mathcal{M}}}
\newcommand{\Nc}{{\mathcal{N}}}
\newcommand{\Sc}{{\mathcal{S}}}

%\newcommand{\Ts}{{\sf T}}

% Categories

\newcommand{\LCT}{{\sf LCT}}

\newcommand{\Loc}{{\sf Loc}}

\newcommand{\LocSrc}{{\sf LocSrc}}

\newcommand{\BG}{{\sf BkGrnd}}

\newcommand{\Sympl}{{\sf Sympl}}

\newcommand{\Alg}{{\sf Alg}}
\newcommand{\CAlg}{{\sf C^*\hbox{-}Alg}}

\newcommand{\Phys}{{\sf Phys}}

\newcommand{\AlgSts}{{\sf AlgSts}}
% Functors

\newcommand{\Af}{{\mathscr A}}

\newcommand{\Bf}{{\mathscr B}}

\newcommand{\Tc}{{\mathcal T}}

\newcommand{\Zf}{{\mathscr Z}}

% Category notations

\newcommand{\id}{{\rm id}}

\newcommand{\nto}{\stackrel{.}{\to}}

\newcommand{\rce}{{\rm rce}}

\newcommand{\dyn}{{\rm dyn}}
\newcommand{\kin}{{\rm kin}}

%\newcommand{\wscl}{{\rm w}^*\textrm{-cl}}

%\newcommand{\rce}{{\mathscr{E}}}

% Groups

%\DeclareMathOperator{\gcd}{gcd}

\begin{document}

%%%% Article title to be placed here
\title{Locally covariant quantum field theory and \\ the problem of  
formulating the same physics \\ in all spacetimes}

\author{Christopher J. Fewster\footnote{Email: {\tt chris.fewster@york.ac.uk}}}%$^{1}$}

%%%%%%%%% Insert author address here
\affil{Department of Mathematics, University of York, Heslington, York, YO10 5DD, U.K.}
%
%%%%% Subject entries to be placed here %%%%
%\subject{Quantum physics, spacetime physics}
%
%%%%% Keyword entries to be placed here %%%%
%\keywords{Quantum field theory in curved spacetimes, axiomatic quantum field theory}

%%%% Insert corresponding author and its email address}
%\corres{Christopher J. Fewster\\
%\email{chris.fewster@york.ac.uk}%}

%%%% Abstract text to be placed here %%%%%%%%%%%%
\maketitle
\begin{abstract}
The framework of locally covariant quantum field theory is
discussed, motivated in part using `ignorance principles'. 
It is shown how theories can be represented by suitable
functors, so that physical equivalence of theories may be expressed via
natural isomorphisms between the corresponding functors. 
The inhomogeneous scalar field is used to illustrate the ideas. It is argued that there are two reasonable definitions of the local physical content associated with a locally covariant theory;
when these coincide, the theory is said to be dynamically local.
The status of the dynamical locality condition is reviewed, as are its
applications in relation to (a) the foundational question of what it means
for a theory to represent the same physics in different spacetimes,
and (b) a no-go result on the existence of natural states. 
\end{abstract}
%%%%%%%%%%%%%%%%%%%%%%%%%%%

%%%%%%%%%% Insert the texts which can accomdate on firstpage in the tag "fmtext" %%%%%

%\begin{fmtext}
\section{Introduction}

Quantum field theory (QFT) was originally developed as a theory of particle physics in Minkowski space,
in which the Poincar\'e symmetry group plays a key role.
It appears in the practical computations of Lagrangian QFT, with pervasive use of momentum space techniques, in the classification of particle species via representation theory,  and also in axiomatic approaches to the subject in which a unitary Hilbert space representation of the Poincar\'e
group and an invariant vacuum vector take centre stage~\cite{StreaterWightman, Haag}. 

Our universe, however, is not Minkowski space, but instead is well-described by a curved spacetime;
accordingly, much work has been devoted to the extension of QFT to such backgrounds. Even where the starting point is a classical Lagrangian,
for which minimal coupling can suggest a natural extension to curved spacetime, the formulation of the quantum field theory raises numerous conceptual issues 
which have gradually been solved
over the past 40 years (see e.g.,~\cite{Wald_qft}).

%\end{fmtext}
%%%%%%%%%%%%%%% End of first page %%%%%%%%%%%%%%%%%%%%%

In the process, reliance on
spacetime symmetries, a preferred vacuum state, and even the notion of particles
%\phantom{have} 
have all had to be jettisoned. 
The axiomatic setting faces an additional problem. When one strips out the 
axioms of Wightman--G{\aa}rding~\cite{StreaterWightman} or Haag--Kastler--Araki~\cite{Haag}
QFT that relate to the Poincar\'e group, one is left with a rather meagre residue.
Moreover, the aim of QFT in curved spacetime is to permit the formulation of, in 
some sense, the `same' physical theory in arbitrary (sufficiently well-behaved) spacetime backgrounds.
What axiom can be given to capture this idea of formulating the same physics in 
all spacetimes (a phrase we will abbreviate as SPASs)?

Perhaps for this reason, the axiomatic development of QFT in curved spacetimes has
been comparatively underdeveloped, both in contrast to axiomatic approaches in 
Minkowski space and to the investigation of concrete QFT models in curved spacetime. 
The main purpose of this contribution is to present the axiomatic framework 
of \emph{local covariance} for
physical theories on general spacetime backgrounds, introduced by Brunetti, Fredenhagen
and Verch~\cite{BrFrVe03}, and to describe the extent to which it addresses the problem 
of SPASs~\cite{FewVer:dynloc_theory}. The ideas will be illustrated using
the inhomogeneous scalar field model, following the recent treatment~\cite{FewSch:2014}. 
There are two side themes: first, 
the absence of any viable notion of natural state compatible with local covariance to
replace the Minkowski vacuum state -- a general model-independent result proved in~\cite{FewVer:dynloc_theory} will
be described, but in addition a new and self-contained
argument will be given for the inhomogeneous scalar field; 
second, I will attempt to motivate some of the ideas presented as a 
constructive use of `ignorance principles'.

%%%% Insert A head here

\section{Ignorance principles} 

The task of science is to reduce the complexity of the real world to basic ideas and principles from which progressively more detailed models of reality can be built.  The success of this endeavour
is all the more remarkable, given that there are many things about the world that we \emph{do} not know, and moreover many things that we \emph{cannot} know. Of course, science aims to remedy contingent ignorance, but even here progress has been greatly assisted by a fortuitous separation of scales in the structure of matter, 
permitting the development of fluid dynamics, for example, without the need to understand atoms, and of chemistry without the
need to understand quarks. 

On the other hand, it is much less obvious that science can proceed at all
in a world where there are things that cannot be known. Imagine a world in which influences of which we had no control or knowledge were at work, permeating physical phenomena on all scales and without restriction. That world would appear capricious, perhaps not even displaying statistical regularity despite the best efforts of experimentalists. A pre-requisite for the success of science, therefore, would appear to be the principle:
{\em Anything that we cannot know may be neglected}, which we  dignify with the title of the 
{\em ignorance meta-principle}. 

The unavoidable ignorance to be discussed here is imposed by the bounded
speed of propagation of all influences and signals. To the best of our knowledge, this is a law of nature, which absolutely denies us direct access to regions of spacetime that are
spacelike-separated from our own. Application of the ignorance meta-principle
leads to familiar principles of locality: that observations in a spacetime region $O$ 
should be independent of any made in the causal complement of $O$, and that the
equations of motion should be consistent with the finite speed of light. 
Each can be regarded as a consistency mechanism in the theory that maintains the
ignorance of experimenters in $O$ of the unknowable world beyond. 
To these familiar principles, one can add two more ideas. Namely, the description of local physics
in $O$ should be independent of spacetime or other background structures outside $O$ to the extent that (a) the description would not change if there was no
spacetime beyond $O$, and (b) it would also be unchanged if the background structures
were changed outside $O$. Here, we have in mind that $O$ is \emph{causally convex}, i.e., every causal curve with endpoints in $O$ lies entirely within $O$. 

In stating (a) and (b) we have switched from statements about the causal complement of $O$ to 
statements about its complement. Certainly no controlled experiment in $O$ can have
any direct access to the complement, as a result of causal convexity. The causal complement
of $O$ is completely unknowable (from $O$), while the causal future and past of $O$ are each
only partially knowable, because information is lost from the causal past to the causal complement,
which also supplies information to the causal future. It seems reasonable to apply the 
ignorance principle also in this case, because the description of local physics in $O$
should not require a precise knowledge of the prior history of the world.  Certainly,
this prior history may contribute to the determination of the state of physical systems
in $O$, and can be passively observed from within $O$ (e.g., astronomical observations),
but the local physical processes in $O$ should be independent of it, and should be susceptible
to experiments in which this background is controlled for or screened out, and
which can be repeated at a later stage in the history of the world provided the local
conditions are recreated. In any case, the main aim of this discussion is to motivate ideas that can be turned into precise technical statements in the framework of local covariance to yield definite consequences. In this way, we are making constructive use of our ignorance to
guide the formulation of physical theories. We now turn to the formal development of the framework.

\section{Local covariance}\label{sec:cov}

\subsection{General setting}

The discussion above provides a motivation for the formulation of
locally covariant physical theories introduced by Brunetti, Fredenhagen and Verch~\cite{BrFrVe03} and further developed in~\cite{FewVer:dynloc_theory}. The underlying ideas 
first appeared in \cite{Verch01} and \cite{Ho&Wa01}.

The fundamental idea is that a physical theory $\Af$ should be 
formulated on general spacetime backgrounds, so that to each background $\Mb$ 
there is a
mathematical object $\Af(\Mb)$ describing the theory $\Af$
in $\Mb$. From the perspective of quantum field theory
in curved spacetimes, or general relativity, it is natural to 
aim for a description on as wide a class of spacetimes as possible. 
However, it may also be motivated from ignorance of the
precise geometry or topology outside the region of immediate interest:
the description ought to be valid regardless of how the
spacetime is continued (or whether it continues at all).

The spacetimes of interest should form the objects of a category $\BG$, 
in which the morphisms indicate allowed spacetime embeddings:
$\psi:\Mb\to\Nb$ in $\BG$ indicates that $\psi$ is a way of embedding
spacetime $\Mb$ as a subspacetime of $\Nb$; equivalently,
we may think of $\psi$ as specifying a particular continuation of $\Mb$
as a spacetime, with the idea that physics in $\Mb$ should be indistinguishable
from physics within its image in $\Nb$. 

 Successive spacetime
embeddings $\varphi: \Lb\to\Mb$ and $\psi:\Mb\to\Nb$
naturally provide a means of embedding $\Lb$ in $\Nb$, which
is given by the composite morphism $\psi\circ\varphi$.
Of course, every spacetime $\Mb$ can be regarded as a subspacetime
of itself in a trivial way; this corresponds to the identity morphism
$\id_\Mb$ of $\Mb$. As we intend that a morphism from $\Mb$
to $\Nb$ embeds $\Mb$ as a subspacetime of $\Nb$ 
it is reasonable to demand that all morphisms in $\BG$ are monic;
that is, $\psi\circ\varphi_1=\psi\circ\varphi_2$ implies that $\varphi_1=\varphi_2$.

The precise specification of the category $\BG$ can vary.
The main example studied to date is the category $\Loc$,
whose objects are oriented and time-oriented globally hyperbolic 
spacetimes of fixed (but arbitrary) dimension $n\ge 2$. Each object
$\Mb$ is thus a quadruple $\Mb=(\Mc,\gb, \ogth, \tgth)$ in which
$\Mc$ is a smooth $n$-dimensional orientable manifold, equipped with
smooth Lorentz metric $\gb$ of signature $+-\cdots -$,
and orientation $\ogth$ (a component of the set of smooth nowhere vanishing $n$-form fields) and a time orientation $\tgth$ (a component of the cone of smooth nowhere vanishing timelike $1$-form fields),
so that the condition of global hyperbolicity holds: there are no
closed causal curves, and all sets of the form $J_\Mb^+(p)\cap J_\Mb^-(q)$
are compact for $p,q\in\Mc$.
 
A morphism $\psi:(\Mc,\gb,\ogth,\tgth)\to(\Mc',\gb',\ogth',\tgth')$ in $\Loc$ is given by a smooth embedding $\psi:\Mc\to\Mc'$ 
of the underlying manifolds that is isometric and preserves the
orientation and time-orientation
\[
\psi^*\gb'=\gb, \qquad \psi^*\ogth'=\ogth,\qquad \psi^*\tgth'=\tgth,
\]
and such that the image $\psi(\Mc)$ is a causally convex subset
of the codomain spacetime. In particular, $\psi$ is injective as
a function and monic as a morphism in $\BG$. 

Depending on the precise application, one could allow for additional background structure, such as background source fields (e.g., \cite{SandDappHack:2012,FewSch:2014})
or more general bundle structures (e.g., \cite{BeniniDappiaggiSchenkel:2013}).
Alternatively, one could equally allow for different models of
spacetime structure, for example, taking $\BG$ to be a category of discrete causal
sets with morphisms injective maps respecting causal order and
with a causally convex image. 

The mathematical objects describing the given theory
in specific spacetimes are also required to be objects within
a category $\Phys$. Here, the interpretation of a morphism
$f:P\to Q$ in $\Phys$ is that $f$ embeds the physical system $P$ as a subsystem
of $Q$; accordingly, we assume 
that all morphisms in $\Phys$ be monic (in some applications
this has been relaxed e.g.,~\cite{DappLang:2012,SandDappHack:2012,FewLang_Maxwell:2014,BecSchSza:2014}). The specification
of $\Phys$ reflects the type of physical theory under 
consideration. Commonly employed examples
include $\Sympl$, the category of symplectic real vector
spaces with symplectic maps as morphisms [e.g., to 
model linear dynamical systems], or the
category $\Alg$ of unital $*$-algebras with unit-preserving injective $*$-homomorphisms as morphisms, which would be a natural
setting for a description of quantum theory in terms of local
algebras of fields and/or observables. This setting is
deliberately general and allows for many variations, e.g., 
restricting to the subcategory $\CAlg$ of $\Alg$,
whose objects are required to be $C^*$-algebras. 
It should be clear that we are describing a framework
for physical theories, rather than any particular theory.

Returning to the description of a theory $\Af$,
let us consider a morphism $\psi:\Mb\to \Nb$ in $\BG$.
For each of $\Mb$ and $\Nb$, there should be a 
mathematical object $\Af(\Mb)$ and $\Af(\Nb)$
of $\Phys$. Now our basic idea is that $\psi$ embeds
$\Mb$ in $\Nb$ in such a way that physics in
$\Mb$ ought to be indistinguishable from
physics in the image $\psi(\Mb)$ in $\Nb$.
Thus, there should be a way of embedding
$\Af(\Mb)$ as a physical subsystem of $\Af(\Nb)$,
represented by a morphism $\Af(\psi):\Af(\Mb)\to
\Af(\Nb)$, which protects the ignorance of an
experimenter in $\Mb$ as to whether the
experimental region is really just a portion of
the larger spacetime $\Nb$.
There may, of course, be many ways of
embedding $\Af(\Mb)$ in $\Af(\Nb)$, but
our assumption is that the theory should
specify one of these. We require
\begin{itemize}
\item $\Af(\id_\Mb) = \id_{\Af(\Mb)}$ for every $\Mb$; i.e., trivial embeddings of
backgrounds correspond to trivial subsystem embeddings  
\item for successive embeddings $\varphi: \Lb\to\Mb$ and $\psi:\Mb\to\Nb$, the subsystem embeddings obey
\[
\Af(\psi)\circ\Af(\varphi)  =\Af(\psi\circ\varphi),
\]
i.e., the composite of the two subsystem embeddings 
should be the subsystem embedding of the composite spacetime
embedding.
\end{itemize}
The second part again reflects the ignorance principle:
if one cannot detect whether spacetime has been 
extended at all, one should certainly not be able
to distinguish whether it was extended in one step
or in two successive stages. These demands together
amount to the following definition.
\begin{definition} A locally covariant physical theory is
a covariant functor $\Af:\BG\to\Phys$.
\end{definition}

\subsection{Example: scalar field with sources}

As an example, we describe a simple theory: the
Klein--Gordon equation with an external source. This
model was studied in some detail recently in \cite{FewSch:2014}
-- the presentation here is a streamlined account 
and most details are suppressed. The discussion 
at the end of Section~\ref{sec:rce} and 
Theorem~\ref{thm:inequiv} are new.

As the category of backgrounds we take a category
$\LocSrc$ whose objects are pairs $(\Mb,\Jb)$ where
$\Mb$ is an object of $\Mb$ and $\Jb\in C^\infty(\Mb)$
is a smooth background field. A morphism from $(\Mb,\Jb)$
to $(\Mb',\Jb')$ is defined by a morphism $\psi:\Mb\to\Mb'$
in $\Loc$ that also obeys $\psi^*\Jb'=\Jb$. 

The classical theory we wish to describe has Lagrange
density 
\begin{equation}\label{eq:action}
\LL_{(\Mb,\Jb)} =\rho_\gb\left(\frac{1}{2} g^{ab}\nabla_a\phi \nabla_b\phi -
\frac{1}{2} m^2 \phi^2 -   \Jb \phi\right)
\end{equation}
where $\rho_\gb$ is the canonical volume density induced by $\gb$. The corresponding
equation of motion is
\[
(\Box_\Mb +m^2)\phi +\Jb = 0
\]
on background $(\Mb,\Jb)$. As $\Mb$ is globally hyperbolic,
there are unique advanced ($-$) and retarded ($+$) 
Green functions $E_{\Mb}^\pm:\CoinX{\Mb}\to C^\infty(\Mb)$ 
such that, for each $f\in\CoinX{\Mb}$, $\phi=E_{\Mb}^\pm f$ is the unique smooth solution
to $(\Box_\Mb+m^2)\phi=f$ with support in $J^\pm_\Mb(\supp f)$.  

For the quantum field theory, we define on each $(\Mb,\Jb)$
a unital $*$-algebra $\Af(\Mb,\Jb)$, with unit $\II_{\Af(\Mb,\Jb)}$, generated by 
elements $\Phi_{(\Mb,\Jb)}(f)$ ($f\in\CoinX{\Mb}$) and
subject to the relations
\begin{itemize}
\item Complex linearity of $f\mapsto \Phi_{(\Mb,\Jb)}(f)$
\item Hermiticity: $\Phi_{(\Mb,\Jb)}(f)^*= \Phi_{(\Mb,\Jb)}(\overline{f})$ for all $f\in\CoinX{\Mb}$
\item Field equation: 
\[
\Phi_{(\Mb,\Jb)}((\Box_\Mb + m^2)f) +
\left(\int_\Mb \Jb f \dvol_\Mb \right) \II_{\Af(\Mb,\Jb)} = 0
\]
for all $f\in\CoinX{\Mb}$
\item Commutation relation
\[{}
[\Phi_{(\Mb,\Jb)}(f),\Phi_{(\Mb,\Jb)}(f')]=i E_{\Mb}(f,f')\II_{\Af(\Mb,\Jb)}
\]
for all $f,f'\in\CoinX{\Mb}$, where
\[
E_{\Mb}(f,f')= \int_\Mb f (E_\Mb^-- E_\Mb^+)f' \dvol_\Mb.
\]
\end{itemize}
The interpretation of this algebra is that the generator
$\Phi_{(\Mb,\Jb)}(f)$ is to be thought of as
the quantum field smeared against test function $f$.
The form of the commutation relations may be 
motivated as the Dirac quantization of the 
classical Peierls' bracket of the corresponding
smeared classical fields. 

This completes the construction of the algebra in each spacetime.
Given a morphism $\psi:(\Mb,\Jb)\to(\Mb',\Jb')$ in $\LocSrc$,
we may define a map $\Af(\psi):\Af(\Mb,\Jb)\to\Af(\Mb',\Jb')$ by
\[
\Af(\psi)\Phi_{(\Mb,\Jb)}(f) = \Phi_{(\Mb',\Jb')}(\psi_*f)
\]
for $f\in\CoinX{\Mb}$, where $\psi_*$ denotes the push-forward
of compactly supported functions
\[
(\psi_*f)(p)=\begin{cases} f(\psi^{-1}(p)) & p\in\psi(\Mb)\\
0 & \text{otherwise.}\end{cases}
\]
The above expression defines $\Af(\psi)$ on the generators
of $\Af(\Mb,\Jb)$ and, because it is compatible with
the relations imposed in the two algebras, it extends to a unit-preserving
$*$-homomorphism, which is injective ($\Af(\Mb,\Jb)$ can be shown to be a simple
algebra). Thus $\Af(\psi)$ is a morphism in $\Alg$. It is clear from the definition
and properties of the push-forward that 
that the functorial conditions $\Af(\id_{(\Mb,\Jb)})=\id_{\Af(\Mb,\Jb)}$ 
and $\Af(\psi\circ\varphi)=\Af(\psi)\circ\Af(\varphi)$ are met. Accordingly,
we have defined the theory as a functor $\Af:\LocSrc\to\Alg$.

The definition of the morphisms $\Af(\psi)$ seems almost an afterthought,
but it is actually crucial to the definition of the theory. Indeed, the
algebras by themselves do not really specify the theory at all. To see this,
fix a background $(\Mb,\Jb)$ and also choose a particular real-valued solution $\phi\in C^\infty(\Mb)$
to the classical equation of motion. Now define a map $\kappa_\phi:\Af(\Mb,\Jb)\to\Af(\Mb,0)$
by 
\begin{equation}\label{eq:kappaphi}
\kappa_\phi \Phi_{(\Mb,\Jb)}(f) = \Phi_{(\Mb,0)}(f) + \left(\int_\Mb f\phi\,\dvol_\Mb\right)\II_{\Af(\Mb,0)}
\end{equation}
and $\kappa_\phi  \II_{\Af(\Mb,\Jb)}=\II_{\Af(\Mb,0)}$. One may check that this
map on generators is compatible with the relations of the two algebras\footnote{
For example, let $A=\Phi_{(\Mb,\Jb)}((\Box_\Mb+m^2)f) + \left(\int_\Mb \Jb f \dvol_\Mb \right) \II_{\Af(\Mb,\Jb)}$. Then our definitions give
\[
\kappa_\phi A= \Phi_{(\Mb,0)}((\Box_\Mb+m^2)f) + \left(\int_\Mb \left((\Box_\Mb+m^2)f\right)\phi\,\dvol_\Mb\right)\II_{\Af(\Mb,0)}  + \left(\int_\Mb \Jb f\,\dvol_\Mb\right)\II_{\Af(\Mb,0)} =0
\]
using the relations in $\Af(\Mb,0)$ and the field equation obeyed by $\phi$, 
which is consistent with the fact that $A=0$ by the relations in $\Af(\Mb,\Jb)$.} 
and therefore
extends to a morphism in $\Alg$; in fact, it is an isomorphism, with inverse defined
on generators by
\[
\kappa_\phi^{-1} \Phi_{(\Mb,0)}(f) =  \Phi_{(\Mb,\Jb)}(f) - \left(\int_\Mb f\phi\,\dvol_\Mb\right)\II_{\Af(\Mb,\Jb)}.
\]
As the algebras $\Af(\Mb,\Jb)$ and $\Af(\Mb,0)$ are isomorphic, they 
carry no specific information about the background source $\Jb$.\footnote{By the time-slice property (discussed later) there are isomorphisms $\Af(\Mb,\Jb)\cong \Af(\Mb',\Jb')$ whenever
the Cauchy surfaces of $\Mb$ and $\Mb'$ can be related by an orientation-preserving
diffeomorphism.} Note, however, that $\kappa_\phi$ depends on the special
choice of a particular solution $\phi$, and there is no canonical way of choosing
such a solution in a general background. 

Developing this point a bit further, let $\lambda\in\RR$ and define a functor
$\Zf_\lambda:\LocSrc\to\LocSrc$ so that
$\Zf_\lambda(\Mb,\Jb)=(\Mb,\lambda\Jb)$ and so that, if $\psi:(\Mb,\Jb)\to (\Mb',\Jb')$, then
 $\Zf_\lambda(\psi):(\Mb,\lambda\Jb)\to (\Mb',\lambda\Jb')$ has the same underlying map as $\psi$. Then the functor $\Af_\lambda:=\Af\circ\Zf_\lambda : \LocSrc \to \Alg$ is a new theory,
which assigns algebra $\Af_\lambda(\Mb,\Jb)=\Af(\Mb,\lambda \Jb)$ to the background $(\Mb,\Jb)$;
namely, $\Af_\lambda$ is the theory of the inhomogeneous field with a coupling strength $\lambda$, 
corresponding to classical field equation $(\Box_\Mb+m^2)\phi+\lambda\Jb=0$ on background $(\Mb,\Jb)$. For $\lambda\neq \mu$, we would expect that the theories $\Af_\lambda$ and
$\Af_{\mu}$ should represent different physics; however, the arguments above show that
$\Af_\lambda(\Mb,\Jb)$ and $\Af_\mu(\Mb,\Jb)$ are isomorphic for each background.  
Therefore it is the specification of the morphisms of the theories, rather than the objects (algebras,
in this case) that distinguishes them. We return to this point in   subsection~\ref{sec:rce} below.

The theory $\Af$ determines, on each background, the algebra of smeared fields of the inhomogeneous scalar field, suitable elements of which are observables. To complete the physical description we need to specify allowed states -- our discussion is based on \cite{BrFrVe03, Fewster:gauge}. A {\em state space} for $\Ac\in\Alg$ is a subset $\Sc$ of linear functionals $\omega$ on $\Ac$ that are 
normalized ($\omega(\II_\Ac)=1$), positive ($\omega(A^*A)\ge 0$ for all $A\in\Ac$) 
and so that $\Sc$ is closed under convex linear combinations.\footnote{For simplicity
of presentation, we suppress a further condition often imposed on state spaces: namely
that $\Sc$ should be closed under operations induced by $\Ac$, i.e.,
to each $\omega\in\Sc$ and $B\in\Ac$ with $\omega(B^*B)> 0$, the state $\omega_B(A):=\omega(B^*AB)/\omega(B^*B)$ is also an element of $\Sc$.}
As usual, $\omega(A)$ is interpreted as the expectation value of observable $A$
in state $\omega$ --- using the GNS theorem, each state induces
a Hilbert space representation $\pi_\omega$ of $\Ac$ on a Hilbert space $\HH_\omega$
with a distinguished vector $\Omega_\omega\in\HH_\omega$ so that $\omega(A)=\ip{\Omega_\omega}{\pi(A)\Omega_\omega}$ for all $A\in\Ac$, recovering the familiar Born probability rule.
In particular, the set of all states $\Ac^*_{+,1}$ of $\Ac$ is a state space. 

We may now introduce a category $\AlgSts$, whose objects are pairs $(\Ac,\Sc)$,
where $\Ac\in\Alg$ and $\Sc$ is a state space for $\Ac$. A morphism in $\AlgSts$
between objects $(\Ac,\Sc)$ and $(\Bc,\Tc)$ is an $\Alg$-morphism $\alpha:\Ac\to\Bc$
with the additional property $\alpha^*\Tc \subset \Sc$, where $\alpha^*$ is the dual map
to $\alpha$. Composition of morphisms in $\AlgSts$ is inherited from $\Alg$. 

Our inhomogeneous scalar field theory $\Af$ may be augmented to a theory with
values in $\AlgSts$ in various ways. The simplest is to define $\tilde{\Af}:\LocSrc\to\AlgSts$
so that  $\tilde{\Af}(\Mb,\Jb)= (\Af(\Mb,\Jb),\Af(\Mb,\Jb)^*_{+,1})$ for each object $(\Mb,\Jb)$
of $\LocSrc$,
and taking $\tilde{\Af}(\psi)$ to be the morphism induced by $\Af(\psi)$ for each morphism
$\psi:(\Mb,\Jb)\to(\Mb',\Jb')$ in $\LocSrc$  (note that $\Af(\psi)^*$ maps any state
of $\Af(\Mb',\Jb')$ to a state of $\Af(\Mb,\Jb)$). A more interesting possibility is to equip each $\Ac(\Mb,\Jb)$ with the corresponding set of Hadamard states, which are the standard choice
of physically acceptable states of the scalar field (see, e.g.,~\cite{Wald_qft}).

\begin{definition} A state $\omega$ on $\Ac(\Mb,\Jb)$ is said to be \emph{Hadamard}
if the corresponding two-point function $W^{(2)}_\omega:\CoinX{\Mb}\times\CoinX{\Mb}\to\CC$
defined by $W^{(2)}_\omega(f,f')= \omega(\Phi_{(\Mb,\Jb)}(f)\Phi_{(\Mb,\Jb)}(f'))$ 
is a distribution in $\DD'(\Mb\times\Mb)$ with wave-front set
\begin{equation}\label{eq:Hadamard}
\WF(W^{(2)}_\omega) \subset \Nc^-\times \Nc^+ 
\end{equation}
where $\Nc^{+/-}$ is the bundle of future/past-directed null covectors on $\Mb$. 
The set of all Hadamard states on $\Ac(\Mb,\Jb)$ will
be denoted $\Sc(\Mb,\Jb)$. 
\end{definition}
It would take us too far from our main purpose to give the definition
of the wave-front set here (see \cite{Hormander1} for details)
but the main points are that:
\begin{itemize}
\item the wave-front set $\WF(u)$ of a distribution $u\in\DD'(X)$ on manifold $X$
is a subset of the cotangent bundle $T^*X$ encoding the singular structure of $u$ ---
in particular smooth distributions have empty wave-front sets;
\item under pull-backs by smooth functions the wave-front set obeys
\[
\WF(\kappa^*u)\subset \kappa^*\WF(u);
\]
\item in the QFT context, the wave-front set condition \eqref{eq:Hadamard} 
on the two-point function is sufficient to fix the wave-front sets of all $n$-point
functions exactly (combining \cite[Prop.~6.1]{StrVerWol:2002} and \cite{Sanders_uSC:2010}) and also
ensures that the two-point function differs from the `Hadamard parametrix' by a smooth function
\cite{Radzikowski_ulocal1996}; 
\item the form of the wave-front set condition recalls fact that Minkowski two-point functions
are positive frequency in the first variable and negative frequency in the second.\footnote{The 
correspondence between `positive frequency' and $\Nc^-$ is an unfortunate by-product
of the standard conventions for Fourier transform, used in \cite{Hormander1}. 
Ref.~\cite{Radzikowski_ulocal1996} and some of the other literature use nonstandard
conventions to remove this issue.}
\end{itemize}
It is easily seen that $\Sc(\Mb,\Jb)$ is a state space for $\Ac(\Mb,\Jb)$. 
Now consider a morphism $\psi:(\Mb,\Jb)\to(\Mb',\Jb')$ and a Hadamard state
$\omega\in\Sc(\Mb',\Jb')$. Then the $n$-point functions of $\omega$ and
$\Af(\psi)^*\omega$ are related by
\begin{align*}
W_{\Af(\psi)^*\omega}^{(n)}(f_1,\ldots,f_n) &= (\Af(\psi)^*\omega)(\Phi_{(\Mb,\Jb)}(f_1)
\cdots \Phi_{(\Mb,\Jb)}(f_n)) \\
&= \omega((\Af(\psi)\Phi_{(\Mb,\Jb)}( f_1)) \cdots (\Af(\psi)\Phi_{(\Mb,\Jb)}( f_n)))\\
&= \omega(\Phi_{(\Mb',\Jb')}(\psi_* f_1) \cdots \Phi_{(\Mb',\Jb')}(\psi_* f_n)) = W_{\omega}^{(n)}(\psi_* f_1,\ldots,\psi_* f_n) 
\end{align*}
i.e., $W_{\Af(\psi)^*\omega}^{(n)}=(\psi\times\cdots\times\psi)^*W_\omega^{(n)}$, from which
it follows that $\Af(\psi)^*\omega$ is Hadamard. Hence, setting $\tilde{\Af}(\Mb,\Jb) = 
(\Af(\Mb,\Jb),\Sc(\Mb,\Jb))$, $\Af(\psi)$ induces a morphism $\tilde{\Af}(\psi)$ in $\AlgSts$ between $\tilde{\Af}(\Mb,\Jb)$ and $\tilde{\Af}(\Mb',\Jb')$. This defines a new theory
$\tilde{\Af}:\LocSrc\to\AlgSts$. 

One might reasonably wonder how small a state space can be: can we choose
a state space consisting of a single state $\omega_{(\Mb,\Jb)}$ for each background,
so that $\Af(\psi)^*\omega_{(\Mb',\Jb')}=\omega_{(\Mb,\Jb)}$ for every morphism
$\psi:(\Mb,\Jb)\to(\Mb',\Jb')$? Quantum field theory in Minkowski space is
so tightly framed around the vacuum state that it is only natural to seek
a replacement in general spacetime backgrounds. However, 
as will be discussed later, such \emph{natural states}
can be ruled out under general circumstances for quantum field theories.   
This has two consequences for physics. First, the particle interpretation
of the theory is based on excitations of `the vacuum', and so
the loss of a preferred state also means the loss of a preferred
notion of particles. Second, any procedure that does define a state in all
spacetime, such as a path integral prescription [setting aside the difficulties of making this precise] or the recently proposed construction~\cite{AAS}, must depend in a nonlocal way on the
spacetime including those portions outside the experimental region
controlled by an observer. It would seem to run counter to
the ignorance meta-principle to ascribe operational significance
to such a state. Further discussion on these lines appears in~\cite[\S 5]{CF-RV-ultraSJ}.

\subsection{Relations between theories}

The example of the theories $\Af_\lambda$ shows that physical equivalence of
two theories $\Af,\Bf:\BG\to\Phys$ is not simply a matter of the existence of isomorphisms
between $\Af(\Mb)$ and $\Bf(\Mb)$ for each background $\Mb$.
\begin{definition}
A theory $\Af:\BG\to\Phys$ is a \emph{subtheory} of $\Bf:\BG\to\Phys$ if there is
a natural transformation $\zeta:\Af\nto\Bf$. The theories are \emph{equivalent} if 
there is a natural isomorphism between them.
\end{definition}
Recall that a natural transformation between functors $\Af$ and $\Bf$ is a 
collection of $\Phys$-morphisms $\zeta_\Mb:\Af(\Mb)\to\Bf(\Mb)$ for each $\Mb\in\BG$ 
such that the diagram
\[
\begin{tikzpicture}[baseline=0 em, description/.style={fill=white,inner sep=2pt}]
\matrix (m) [ampersand replacement=\&,matrix of math nodes, row sep=3em,
column sep=2.5em, text height=1.5ex, text depth=0.25ex]
{ \Af(\Mb) \&  \Bf(\Mb) \\
\Af(\Nb) \&  \Bf(\Nb)\\ };
\path[->,font=\scriptsize]
(m-1-1) edge node[auto] {$ \zeta_\Mb $} (m-1-2)
        edge node[auto] {$ \Af(\psi) $} (m-2-1)
(m-2-1) edge node[auto] {$ \zeta_\Nb $} (m-2-2)
(m-1-2) edge node[auto] {$ \Bf(\psi) $} (m-2-2);
\end{tikzpicture} 
\]
commutes whenever $\psi:\Mb\to\Nb$ is a morphism in $\BG$.
In other words, the operations of passing between spacetimes and passing between theories
must commute. For $\zeta$ to be a natural isomorphism, 
each $\zeta_\Mb$ must be a $\Phys$-isomorphism.
In the case where $\Af$ and $\Bf$ coincide, the 
natural automorphisms of $\Af$ turn out to have a
physically natural interpretation: they are the global gauge 
transformations of the theory~\cite{Fewster:gauge}.  

In Theorem~\ref{thm:inequiv} we will show that
the theories  $\Af_\lambda$ are inequivalent for distinct $\lambda\in\RR$. 
As a mild digression, we show that this rules out the existence of natural
states in the theory. For suppose that 
the theory $\Af_\lambda$ admits a natural state $(\omega_{(\Mb,\Jb)})_{(\Mb,\Jb)\in\LocSrc}$ for some
$\lambda\neq 0$.  In each background $(\Mb,\Jb)\in\LocSrc$,  the 
one-point function of the natural state $W^{(1)}_{(\Mb,\Jb)}(f) = \omega_{(\Mb,\Jb)}(\Phi_{(\Mb,\Jb)}(f))$  solves the
classical field equation in the sense that
\[
W^{(1)}_{(\Mb,\Jb)}((\Box_\Mb+m^2)f) + \int_\Mb \Jb f\,\dvol_\Mb = 0.
\]
By analogy with \eqref{eq:kappaphi} we may define
an $\Alg$-isomorphism $\kappa_{(\Mb,\Jb)}:\Af_\lambda(\Mb,\Jb)\to
\Af_0(\Mb,\Jb)$ acting on generators by
\[
\kappa_{(\Mb,\Jb)} \Phi_{(\Mb,\lambda\Jb)}(f) = \Phi_{(\Mb,0)}(f) + W^{(1)}_{(\Mb,\Jb)}(f)\II_{\Af(\Mb,0)}
\]
(recall that $\Af_\lambda(\Mb,\Jb)=\Af(\Mb,\lambda\Jb)$
has generators $\Phi_{(\Mb,\lambda\Jb)}(f)$, $f\in\CoinX{\Mb}$).
The morphisms $\kappa_{(\Mb,\Jb)}$ yield a natural isomorphism
between $\Af_\lambda$ and $\Af_0$. To see this, consider a morphism
$\psi:(\Mb,\Jb)\to(\Mb',\Jb')$.  For any $f\in\CoinX{\Mb}$,
\[
\Af_0(\psi)\circ \kappa_{(\Mb,\Jb)}  \Phi_{(\Mb,\lambda\Jb)}(f) 
=  \Phi_{(\Mb',0)}(\psi_*f) + W^{(1)}_{(\Mb,\Jb)}(f)\II_{\Af(\Mb',0)},
\]
while 
\[
\kappa_{(\Mb',\Jb')}\circ \Af_\lambda(\psi) \Phi_{(\Mb,\lambda\Jb)}(f) 
=  \Phi_{(\Mb',0)}(\psi_*f) + W^{(1)}_{(\Mb',\Jb')}(\psi_*f)\II_{\Af(\Mb',0)}.
\]
However, the naturality of the state entails precisely that
\begin{align*}
W^{(1)}_{(\Mb',\Jb')}(\psi_*f)&=
\omega_{(\Mb',\Jb')}(\Phi_{(\Mb',\lambda\Jb')}(\psi_*f)) = 
\omega_{(\Mb',\Jb')}(\Af_\lambda(\psi)\Phi_{(\Mb,\lambda\Jb)}(f))\\
&=\omega_{(\Mb,\Jb)}(\Phi_{(\Mb,\lambda\Jb)}(f))
=
W^{(1)}_{(\Mb,\Jb)}(f).
\end{align*}
As $f$ was arbitrary, it follows that 
\[
\Af_0(\psi)\circ \kappa_{(\Mb,\Jb)}= 
\kappa_{(\Mb',\Jb')}\circ \Af_\lambda(\psi)
\]
which shows that the $\kappa_{(\Mb,\Jb)}$ cohere to
form a natural isomorphism $\kappa:\Af_\lambda\nto \Af_0$. 
If natural states existed for all the theories $\Af_\lambda$ 
($\lambda\in\RR$) then one would be able to establish equivalence between 
all of them. Thus, the inequivalence of these theories 
precludes the existence of natural states.

\subsection{Time-slice axiom and relative Cauchy evolution}
\label{sec:rce}

So far, we have only imposed the condition of local covariance on physical
theories. A much stronger condition is the time-slice axiom, which can
be regarded as encoding the existence of a dynamical law in the theory.
For our presentation, we restrict to categories such as $\Loc$ and
$\LocSrc$ that are based on globally hyperbolic spacetimes, but
the ideas can be extended to more general settings. A morphism $\psi:\Mb\to\Mb'$
in $\Loc$ whose image $\psi(\Mb)$ contains a Cauchy surface of $\Mb'$ will be 
called a \emph{Cauchy morphism}; similarly, a morphism $\psi:(\Mb,\Jb)\to
(\Mb',\Jb')$ in $\LocSrc$ is called Cauchy under the same condition.
\begin{definition}
A locally covariant theory $\Af:\BG\to\Phys$ (where $\BG$ is $\Loc$ or $\LocSrc$) obeys the 
\emph{time-slice axiom}
if $\Af$ maps every Cauchy morphism of $\BG$ to an isomorphism in $\Phys$.
\end{definition}
If $\psi:\Mb\to\Mb'$ is a Cauchy morphism, every aspect of the
physics on $\Mb'$ can be predicted from the physics on $\Mb$,
provided the time-slice axiom holds. 

It was realized by BFV that the time-slice axiom allows the
comparison of dynamics on different backgrounds, in terms of
\emph{relative Cauchy evolution}~\cite{BrFrVe03}. Here we 
describe the adaptation to $\LocSrc$ given in~\cite{FewSch:2014},
with some slight modifications. 
Let $(\Mb,\Jb)$ be an object of $\LocSrc$, with $\Mb=(M,\gb, \ogth, \tgth)$. Let $\hb$ be a compactly supported rank-$2$ covariant
symmetric tensor field such that $(M,\gb+\hb)$ is a globally 
hyperbolic spacetime, with respect to the (unique) time-orientation
$\tgth[\hb]$ that agrees with $\tgth$ outside $\supp\hb$. Then
$\Mb[\hb]=(M,\gb+\hb,\ogth,\tgth[\hb])$ is an object of $\Loc$, 
and $(\Mb[\hb], \Jb[\jb]):=(\Mb[\hb], \Jb+\jb)$ is an object of $\LocSrc$
for any $\jb\in\CoinX{\Mb}$.  
Under these circumstances, we write $(\hb,\jb)\in H(\Mb,\Jb)$. 
Choose open $\gb$-causally convex sets $M^{+/-}$ of $M$ 
lying to the future/past of the $\supp(\hb)\cup \supp(\jb)$,\footnote{That is, there should be Cauchy surfaces $S^\pm$ such that 
$M^\pm\subset I^\pm_\Mb(S^\pm)$, 
$\supp(\hb)\cup \supp(\jb)\subset I^\mp_\Mb(S^\pm)$.} and containing Cauchy surfaces of $\Mb$, as in Fig.~\ref{fig:rce}. 
These sets are therefore also causally convex with respect to
$\gb+\hb$ and, defining $\Mb^\pm$
to be the sets $M^\pm$ equipped with causal structure and orientation 
inherited from $\Mb$, and $\Jb^\pm=\Jb|_{M^\pm}$, 
there are $\LocSrc$ Cauchy morphisms 
\begin{subequations}
\begin{flalign}
i^\pm &: \big(\Mb^\pm,\Jb^\pm\big) \to \big(\Mb,\Jb\big)~,\\
j^\pm &: \big(\Mb^\pm,\Jb^\pm\big) \to \big(\Mb[\hb],\Jb[\jb]\big)
\end{flalign}
\end{subequations}
induced by the set inclusions of $M^\pm$ in $M$. 
A theory $\Af:\LocSrc\to\Phys$ that obeys the time-slice axiom 
converts each of these morphisms to an isomorphism. 
The relative Cauchy evolution of $\Af$ induced by $(\hb,\jb)\in H(\Mb,\Jb)$ is defined as the automorphism
\begin{equation}\label{eq:rce_def}
\rce_{(\Mb,\Jb)}[\hb,\jb]:=
\Af(i^+ )\circ \Af(j^- )^{-1}\circ
\Af(j^+ )\circ \Af(i^+ )^{-1}    
\end{equation}
of $\Af(\Mb,\Jb)$. One may show that $\rce_{(\Mb,\Jb)}[\hb,\jb]$ is independent of the choices of $M^\pm$ made in the construction, cf. \cite[\S 3]{FewVer:dynloc_theory}.

The significance of relative Cauchy evolution can be explained
as follows. Owing to the dynamical law of the theory on $(\Mb,\Jb)$, 
any observable $A$ can be measured in the region $M^+$. 
We fix that description, and proceed to modify the background
to the past of $M^+$, and future of $M^-$, obtaining an observable
in the $M^+$ region of the perturbed background $(\Mb[\hb],\Jb[\jb])$. 
In turn, this observable can also be measured in the $M^-$ region of 
$(\Mb[\hb],\Jb[\jb])$, owing to the dynamical law of the theory on the perturbed background. Transferring that description to 
the $M^-$ region of the unperturbed spacetime, we obtain a new
observable on $(\Mb,\Jb)$ which will not in general coincide with
our original observable $A$. The discrepancy is precisely measured
by the relative Cauchy evolution, which applies in this way to all
aspects of the theory, not just observables. 

\begin{figure}
\begin{center}
\begin{tikzpicture}[scale=0.9]
\definecolor{Gold}{rgb}{.93,.82,.24}
\definecolor{Orange}{rgb}{1,0.5,0}
\draw[fill=lightgray] (-4,0) -- ++(2,0) -- ++(0,4) -- ++(-2,0) -- cycle;
\draw[fill=lightgray] (4,0) -- ++(2,0) -- ++(0,4) -- ++(-2,0) -- cycle;
\draw[fill=gray] (4,3) -- ++(2,0) -- ++(0,0.5) -- ++(-2,0) -- cycle;
\draw[fill=gray] (0,3) -- ++(2,0) -- ++(0,0.5) -- ++(-2,0) -- cycle;
\draw[fill=gray] (-4,3) -- ++(2,0) -- ++(0,0.5) -- ++(-2,0) -- cycle;
\draw[fill=gray] (4,0.5) -- ++(2,0) -- ++(0,0.5) -- ++(-2,0) -- cycle;
\draw[fill=gray] (0,0.5) -- ++(2,0) -- ++(0,0.5) -- ++(-2,0) -- cycle;
\draw[fill=gray] (-4,0.5) -- ++(2,0) -- ++(0,0.5) -- ++(-2,0) -- cycle;
\draw[color=black,line width=2pt,->] (2.25,3.25) -- (3.75,3.25) node[pos=0.4,above]{$j^+$};
\draw[color=black,line width=2pt,->] (2.25,0.75) -- (3.75,0.75) node[pos=0.4,above]{$j^-$};
\draw[color=black,line width=2pt,->] (-0.25,3.25) -- (-1.75,3.25) node[pos=0.5,above]{$i^+$};
\draw[color=black,line width=2pt,->] (-0.25,0.75) -- (-1.75,0.75) node[pos=0.5,above]{$i^-$};
\draw[fill=white] (5,2) ellipse (0.7 and 0.4);
\node at (5,2) {$\hb,\jb$};
\node[anchor=north] at (5,0) {$(\Mb[\hb],\Jb[\jb])$};
\node[anchor=north] at (-3,0) {$(\Mb,\Jb)$};
\node[anchor=north] at (1,3) {$(\Mb^+,\Jb^+)$};
\node[anchor=north] at (1,0.5) {$(\Mb^-,\Jb^-)$};
\end{tikzpicture}
\end{center}
\caption{Schematic representation of relative Cauchy evolution}\label{fig:rce}
\end{figure}
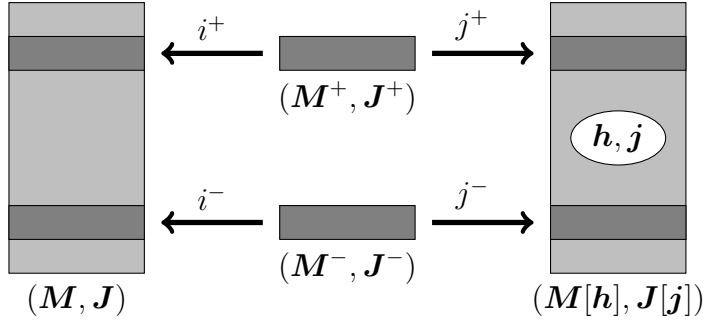

The relative Cauchy evolution is particularly interesting for infinitesimal perturbations:
its functional derivative with respect to the background metric yields
a derivation related to the stress-energy tensor~\cite{BrFrVe03} and similar
results are obtained for other background sources. For example, in  the case of the 
the inhomogeneous scalar field theory one finds that 
\cite[\S 7.3]{FewSch:2014} 
\begin{equation}\label{eq:diffrce}
\left.\frac{d}{ds} \rce_{(\Mb,\Jb)}[s\hb,s\jb] A\right|_{s=0}  = i\,\left[\frac{1}{2}  T_{(\Mb,\Jb)}(\hb)+ \Phi_{(\Mb,\Jb)}(\jb), A\right],
\end{equation}
for all $A\in\Af(\Mb,\Jb)$, where $T_{(\Mb,\Jb)}(\hb) = \int_M h_{ab}\,T^{ab}_{(\Mb,\Jb)}\,\dvol_{\Mb}$ is the smearing with $h_{ab}$ 
of the quantization of the stress-energy tensor\footnote{
The renormalized stress-energy tensor is not an element of
$\Af(\Mb,\Jb)$, but we write $A\mapsto \left[T_{(\Mb,\Jb)}(\hb),A\right]$
as convenient notation for the outer derivation of $\Af(\Mb,\Jb)$
obtained by regularizing the stress-energy tensor by point-splitting,
computing the commutator within $\Af(\Mb,\Jb)$ and then
removing the regulation. In \cite[\S 7.3]{FewSch:2014}, 
the analogue of \eqref{eq:diffrce} was stated only for the 
case $A=\Phi_{(\Mb,\Jb)}(f)$, but it extends immediately to the form given here.}  
\begin{flalign}\label{eqn:fullSET}
T_{(\Mb,\Jb)}^{ab}[\phi] :=  -\frac{2}{\sqrt{\vert \gb\vert }} \frac{\delta S}{\delta g_{ab}(x)}=  \nabla^a \phi \nabla^b \phi - \frac{1}{2}g^{ab}\nabla_c \phi \nabla^c \phi
+\frac{1}{2}m^2 g^{ab} \phi^2  +
g^{ab}  \Jb \phi ,
\end{flalign}
and $S$ is the classical action obtained from the Lagrangian 
\eqref{eq:action}. In \eqref{eq:diffrce} the derivative is 
understood with respect to be taken as a weak derivative
in suitable Hilbert space representations~\cite{BrFrVe03}.  
Thus, the relative Cauchy evolution can be taken as a replacement
for the classical action, because its functional derivative corresponds
to quantities normally obtained from the functional derivative of the action
with respect to the background. 

A key observation \cite[Prop.~3.8]{FewVer:dynloc_theory} is the following.
\begin{theorem}\label{thm:rce_intertwine}
If $\Af:\LocSrc\to\Phys$ and $\Bf$ are two locally covariant theories,
obeying the time-slice axiom, and $\zeta:\Af\nto\Bf$ embeds $\Af$
as a subtheory of $\Bf$, then
\[
\zeta_{(\Mb,\Jb)}\circ \rce^{(\Af)}_{(\Mb,\Jb)}[\hb,\jb] = 
\rce^{(\Bf)}_{(\Mb,\Jb)}[\hb,\jb]
\circ\zeta_{(\Mb,\Jb)}
\]
for all background perturbations $(\hb,\jb)\in H(\Mb,\Jb)$. 
\end{theorem}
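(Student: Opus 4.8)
The plan is to reduce the claim to the naturality of $\zeta$, applied to the four Cauchy morphisms $i^\pm,j^\pm$ from which both relative Cauchy evolutions are assembled. By definition of a natural transformation, for every $\LocSrc$-morphism $\psi:A\to B$ the naturality square
\[
\zeta_B\circ\Af(\psi)=\Bf(\psi)\circ\zeta_A
\]
commutes, where I write $\zeta_A,\zeta_B$ for the components of $\zeta$ at the objects $A,B$. The definition \eqref{eq:rce_def} of $\rce$ involves, however, not only the images $\Af(i^\pm),\Af(j^\pm)$ but also their inverses, so the first step is to record an inverse form of this relation.

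This inverse form is where the time-slice hypothesis enters, and it is really the only point requiring care. Since $i^\pm$ and $j^\pm$ are Cauchy morphisms and both $\Af$ and $\Bf$ obey the time-slice axiom, all four of $\Af(i^\pm),\Af(j^\pm)$ and all four of $\Bf(i^\pm),\Bf(j^\pm)$ are isomorphisms in $\Phys$. Hence, for any Cauchy morphism $\psi:A\to B$, I may solve the naturality square for the inverse factors to obtain
\[
\zeta_A\circ\Af(\psi)^{-1}=\Bf(\psi)^{-1}\circ\zeta_B.
\]
In words, $\zeta$ intertwines $\Af(\psi)$ with $\Bf(\psi)$ in the forward direction and $\Af(\psi)^{-1}$ with $\Bf(\psi)^{-1}$ in the backward direction; the time-slice axiom is exactly what guarantees that the inverses exist so that this makes sense. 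Establishing this \emph{inverse intertwining} is the crux of the argument, and everything else is bookkeeping.

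With both the forward and inverse intertwining relations available, the remainder is a telescoping computation. Starting from $\zeta_{(\Mb,\Jb)}\circ\rce^{(\Af)}_{(\Mb,\Jb)}[\hb,\jb]$, I would propagate the component of $\zeta$ from right to left through the four factors of \eqref{eq:rce_def} one at a time. Each factor is the image under $\Af$ of one of the morphisms $i^\pm,j^\pm$ or the inverse of such an image; applying the forward relation to the former and the inverse relation to the latter replaces the $\Af$-factor by the corresponding $\Bf$-factor while sliding the appropriate component of $\zeta$, evaluated at the intermediate objects $(\Mb^\pm,\Jb^\pm)$ and $(\Mb[\hb],\Jb[\jb])$, one place to the left. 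These intermediate components are introduced and then cancelled in successive steps, so that after all four steps one is left precisely with $\rce^{(\Bf)}_{(\Mb,\Jb)}[\hb,\jb]\circ\zeta_{(\Mb,\Jb)}$, as required. No further input is needed, and independence of the choice of the regions $M^\pm$ is automatic, since the identity is derived for each fixed admissible choice.
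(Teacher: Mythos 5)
Your proof is correct and is essentially the argument the paper has in mind: the paper states this result with a citation to \cite[Prop.~3.8]{FewVer:dynloc_theory} rather than giving a proof, and the proof there is exactly your combination of the naturality squares for $i^\pm,j^\pm$ with the inverse intertwining relation $\zeta_A\circ\Af(\psi)^{-1}=\Bf(\psi)^{-1}\circ\zeta_B$ supplied by the time-slice axiom, followed by the four-step telescoping through \eqref{eq:rce_def}. No gaps.
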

This provides a strong and practical constraint that can be 
used to rule out or classify natural transformations between theories
and particularly their automorphisms~\cite{Fewster:gauge}.  
Here, we indicate how it distinguishes the inhomogeneous 
theories $\Af_\lambda$ for different values of the coupling constant $\lambda$.
In addition to the differential formula~\eqref{eq:diffrce} it
will be useful to use the formula 
\begin{equation}\label{eq:rce}
\rce^{(\Af)}_{(\Mb,\Jb)}[\Ob,\jb]\Phi_{(\Mb,\Jb)}(f)
= \Phi_{(\Mb,\Jb)}(f) + E_\Mb(f,\jb)\II_{\Af(\Mb,\Jb)}
\end{equation}
also obtained in~\cite{FewSch:2014}.
Now the theory $\Af_\lambda$ is defined so that $\Af_\lambda(\Mb,\Jb) = 
\Af(\Mb,\lambda\Jb)$. Hence 
\[
\rce^{(\Af_\lambda)}_{(\Mb,\Jb)}[\hb,\jb]= \rce^{(\Af)}_{(\Mb,\lambda\Jb)}[\hb,\lambda\jb],
\] 
the analogue of \eqref{eq:rce} is
\begin{equation}\label{eq:rce2}
\rce^{(\Af_\lambda)}_{(\Mb,\Jb)}[\Ob,\jb]\Phi_{(\Mb,\lambda\Jb)}(f)
= \Phi_{(\Mb,\lambda\Jb)}(f) + \lambda E_\Mb(f,\jb)\II_{\Af(\Mb,\lambda\Jb)}
\end{equation}
and \eqref{eq:diffrce} entails the formula
\begin{equation}\label{eq:diffrce2}
\left.\frac{d}{ds} \rce_{(\Mb,\Jb)}^{(\Af_\lambda)}[\Ob,s\jb]A\right|_{s=0}  = i\lambda\left[ \Phi_{(\Mb,\lambda\Jb)}(\jb), A\right].
\end{equation}

\begin{theorem} \label{thm:inequiv}
If $\lambda,\mu\in\RR$ are distinct, then
$\Af_\lambda$ and $\Af_\mu$ are inequivalent. 
\end{theorem}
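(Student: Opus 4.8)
The plan is to argue by contraposition: assuming a natural isomorphism $\zeta:\Af_\lambda\nto\Af_\mu$ exists, I would extract rigid constraints on $\lambda,\mu$ from relative Cauchy evolution and show they cannot be met unless $\mu=\lambda$. The engine is Theorem~\ref{thm:rce_intertwine}: since $\Af_\lambda$ and $\Af_\mu$ both obey the time-slice axiom, any such $\zeta$ intertwines their relative Cauchy evolutions, so that
\[
\zeta_{(\Mb,\Jb)}\circ\rce^{(\Af_\lambda)}_{(\Mb,\Jb)}[\hb,\jb]
=\rce^{(\Af_\mu)}_{(\Mb,\Jb)}[\hb,\jb]\circ\zeta_{(\Mb,\Jb)}
\]
for every $(\hb,\jb)\in H(\Mb,\Jb)$. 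I would first feed the pure-source formulas \eqref{eq:rce2} and \eqref{eq:diffrce2} into this identity to pin down $\zeta$ on generators, and then use the metric sector \eqref{eq:diffrce}--\eqref{eqn:fullSET} to remove the residual sign ambiguity.

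Taking $\hb=\Ob$ and differentiating at $\jb=0$, the identity together with \eqref{eq:diffrce2} gives, for all $A$ and all $\jb\in\CoinX{\Mb}$,
\[
\lambda\,[\zeta_{(\Mb,\Jb)}\Phi_{(\Mb,\lambda\Jb)}(\jb),\,\zeta_{(\Mb,\Jb)}A]
=\mu\,[\Phi_{(\Mb,\mu\Jb)}(\jb),\,\zeta_{(\Mb,\Jb)}A].
\]
Hence $\lambda\,\zeta_{(\Mb,\Jb)}\Phi_{(\Mb,\lambda\Jb)}(\jb)-\mu\,\Phi_{(\Mb,\mu\Jb)}(\jb)$ commutes with all of $\Af(\Mb,\mu\Jb)$; as that algebra is simple, its centre is $\CC\,\II$, so
\[
\zeta_{(\Mb,\Jb)}\Phi_{(\Mb,\lambda\Jb)}(\jb)
=\tfrac{\mu}{\lambda}\,\Phi_{(\Mb,\mu\Jb)}(\jb)+c_{(\Mb,\Jb)}(\jb)\,\II
\]
for some linear functional $c_{(\Mb,\Jb)}$ (take $\lambda\neq0$; if exactly one coupling vanishes the source derivation is nontrivial on one side and identically zero on the other, giving inequivalence at once). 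Substituting this affine form into the finite identity built from \eqref{eq:rce2} and matching the scalar shifts yields $\lambda\,E_\Mb(f,\jb)=\tfrac{\mu^2}{\lambda}\,E_\Mb(f,\jb)$ for all $f,\jb$; since $E_\Mb$ descends to a (weakly) non-degenerate form on $\CoinX{\Mb}$ modulo $(\Box_\Mb+m^2)\CoinX{\Mb}$, this forces $\mu^2=\lambda^2$. The theorem therefore holds outright whenever $\mu\neq-\lambda$, and for distinct couplings the entire problem reduces to excluding the single case $\mu=-\lambda\neq0$.

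The main obstacle is precisely this case, where the source sector is blind to the sign: the shift $\lambda E_\Mb(f,\jb)$ in \eqref{eq:rce2} is odd in the coupling, hence perfectly consistent with the forced coefficient $\mu/\lambda=-1$. To break it I would invoke the metric sector \eqref{eq:diffrce}, whose generator is the stress tensor \eqref{eqn:fullSET}. The decisive structural point is that the source term $g^{ab}\Jb\phi$ is the only contribution odd in the coupling: writing
\[
T_{(\Mb,\lambda\Jb)}(\hb)=Q_{\Mb}(\hb)+\lambda\,\Phi_{(\Mb,\lambda\Jb)}\big((g^{ab}h_{ab})\,\Jb\big),
\]
the quadratic part $Q_{\Mb}(\hb)$ is coupling-independent and, acting through commutators, defines the same derivation in $\Af(\Mb,\lambda\Jb)$ and $\Af(\Mb,-\lambda\Jb)$ alike. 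Intertwining the metric-sector derivation and isolating the source-proportional piece, the aim is to show that with the field map forced to coefficient $-1$ the odd term enters $\zeta_{(\Mb,\Jb)}T_{(\Mb,\lambda\Jb)}(\hb)$ with the opposite sign to that demanded by $T_{(\Mb,-\lambda\Jb)}(\hb)$, so that the difference fails to be central once tested on a background with $E_\Mb\big((g^{ab}h_{ab})\Jb,\,\cdot\,\big)\not\equiv0$. Whether this genuine obstruction materialises---rather than cancelling against the transform of $Q_{\Mb}$---is the crux of the argument, and it hinges on a careful treatment of the point-split, Hadamard-renormalised quadratic part under the field map: one must verify that $\zeta_{(\Mb,\Jb)}$ contributes nothing to the source-proportional sector beyond the explicit odd term. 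I expect this renormalisation bookkeeping in the $\mu=-\lambda$ sector to be the true difficulty; the reduction to $\mu^2=\lambda^2$ is routine by comparison.
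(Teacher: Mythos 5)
Your completed steps are correct, and they run on exactly the same engine as the paper's proof, in dual form: the paper uses the finite identity \eqref{eq:rce2} together with Theorem~\ref{thm:rce_intertwine} to show that $X(f)=\zeta_{(\Mb,\Jb)}\Phi_{(\Mb,\lambda\Jb)}(f)-(\lambda/\mu)\Phi_{(\Mb,\mu\Jb)}(f)$ is invariant under every $\rce^{(\Af_\mu)}_{(\Mb,\Jb)}[\Ob,\jb]$, hence central by \eqref{eq:diffrce2}, and then confronts the CCR with the homomorphism property to force $(\lambda/\mu)^2=1$; you instead pin down the affine form of $\zeta_{(\Mb,\Jb)}$ on generators by differentiating the intertwining relation and then match the scalar shifts in \eqref{eq:rce2} to force $\mu^2=\lambda^2$. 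Both routes deliver precisely the same conclusion: a natural isomorphism can exist only if $\lambda=\pm\mu$.

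The genuine problem lies in your plan for the residual case $\mu=-\lambda\neq 0$: no stress-tensor bookkeeping can close it, because the claim is false there --- $\Af_\lambda$ and $\Af_{-\lambda}$ \emph{are} equivalent. Define $\kappa_{(\Mb,\Jb)}\Phi_{(\Mb,\lambda\Jb)}(f):=-\Phi_{(\Mb,-\lambda\Jb)}(f)$. This respects linearity, hermiticity and the commutation relation (which is quadratic, hence sign-blind), and also the field equation, because field and source flip together: applying $\kappa_{(\Mb,\Jb)}$ to $\Phi_{(\Mb,\lambda\Jb)}((\Box_\Mb+m^2)f)+\lambda\bigl(\int_\Mb \Jb f\,\dvol_\Mb\bigr)\II$ gives $-\Phi_{(\Mb,-\lambda\Jb)}((\Box_\Mb+m^2)f)+\lambda\bigl(\int_\Mb \Jb f\,\dvol_\Mb\bigr)\II=0$ by the relations of $\Af(\Mb,-\lambda\Jb)$. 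It is natural, since the overall sign commutes with push-forwards, and it intertwines the relative Cauchy evolutions consistently with Theorem~\ref{thm:rce_intertwine}: both sides of \eqref{eq:rce2} acquire the identical shift $\lambda E_\Mb(f,\jb)\II$, the two sign flips cancelling. Exactly the same cancellation kills your hoped-for obstruction in \eqref{eqn:fullSET}: the coupling-odd term $g^{ab}\Jb\phi$ is even under the simultaneous replacement $\phi\mapsto-\phi$, $\Jb\mapsto-\Jb$. So the case you correctly isolated as the crux is in fact a counterexample, and the statement holds only under the hypothesis $\lambda\neq\pm\mu$. It is worth noting that the paper's own proof is subject to the same limitation: its final display contradicts the homomorphism property only when $(\lambda/\mu)^2\neq 1$, so it, too, establishes inequivalence exactly for $\lambda^2\neq\mu^2$ and is silent (indeed, could not be otherwise) when $\mu=-\lambda$. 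Your instinct about where the difficulty sits was therefore sharper than the text you were trying to reprove; the correct resolution is to weaken the conclusion, not to strengthen the argument.
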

\begin{proof}
Without loss, assume $\mu\neq 0$ and that there is a natural transformation $\zeta:\Af_\lambda\to\Af_\mu$. For
each background $(\Mb,\Jb)$ and all $\jb,f\in\CoinX{\Mb}$, Theorem~\ref{thm:rce_intertwine} gives 
\[
\zeta_{(\Mb,\Jb)}\circ \rce^{(\Af_\lambda)}_{(\Mb,\Jb)}[\Ob,\jb]\Phi_{(\Mb,\lambda\Jb)}(f) = 
\rce^{(\Af_\mu)}_{(\Mb,\Jb)}[\Ob,\jb]
\circ\zeta_{(\Mb,\Jb)}\Phi_{(\Mb,\lambda\Jb)}(f) 
\]
which, together with~\eqref{eq:rce2}, yields
\[
\zeta_{(\Mb,\Jb)} \Phi_{(\Mb,\lambda\Jb)}(f)+ \lambda E_\Mb(f,\jb)\II_{\Af(\Mb,\mu\Jb)}  = 
\rce^{(\Af_\mu)}_{(\Mb,\Jb)}[\Ob,\jb]
\circ\zeta_{(\Mb,\Jb)}\Phi_{(\Mb,\lambda\Jb)}(f) .
\]
However, $\mu E_\Mb(f,\jb)\II_{\Af(\Mb,\mu\Jb)} = 
\rce^{(\Af_\mu)}_{(\Mb,\Jb)}[\Ob,\jb]\Phi_{(\Mb,\mu\Jb)}(f)
- \Phi_{(\Mb,\mu\Jb)}(f)$, so after rearrangement, we may deduce that
$X(f)=\zeta_{(\Mb,\Jb)} \Phi_{(\Mb,\lambda\Jb)}(f)-(\lambda/\mu) \Phi_{(\Mb,\mu\Jb)}(f)$ 
obeys
\[
\rce^{(\Af_\mu)}_{(\Mb,\Jb)}[\Ob,\jb] X(f) = X(f)
\]
for all $\jb\in\CoinX{\Mb}$. By \eqref{eq:diffrce2}, it follows that $[\Phi_{(\Mb,\mu\Jb)}(\jb),X(f)]=0$
for all $\jb\in\CoinX{\Mb}$ and that $X(f)$ is central. Hence
\begin{align*}
[\zeta_{(\Mb,\Jb)} \Phi_{(\Mb,\lambda\Jb)}(f), \zeta_{(\Mb,\Jb)} \Phi_{(\Mb,\lambda\Jb)}(f')]
&=\left(\frac{\lambda}{\mu}\right)^2
[ \Phi_{(\Mb,\mu\Jb)}(f),  \Phi_{(\Mb,\mu\Jb)}(f')] \\
&=\left(\frac{\lambda}{\mu}\right)^2 \zeta_{(\Mb,\Jb)} 
[\Phi_{(\Mb,\lambda\Jb)}(f), \Phi_{(\Mb,\lambda\Jb)}(f')],
\end{align*}
contradicting the assumption that $\zeta_{(\Mb,\Jb)}$ is a homomorphism. Hence $\Af_\lambda$ cannot be embedded as a subtheory of $\Af_\mu$, and in particular is not equivalent to it.
\end{proof}
This result demonstrates the power of the functorial viewpoint; 
recall that the individual algebras $\Af_\lambda(\Mb,\Jb)$ and $\Af_\mu(\Mb,\Jb)$ are isomorphic for all $\lambda,\mu\in\RR$. 
Thus, the distinction between the theories lies
in the fact that there is no way of choosing such isomorphisms in 
a natural way. In this sense, the physics represented by a theory is encoded in its functorial structure. 
%As a side remark, we note that Theorem~\ref{thm:inequiv}
%provides a simple proof that none of the theories $\Af_\lambda$ ($\lambda\neq 0$) can admit natural states (the same is true of $\Af_0$ by Theorem~\ref{thm:nogo}). 

\section{Local physical content and dynamical locality}\label{sec:dynloc}

We turn to the description of the local physical content of a theory $\Af:\Loc\to\Phys$, in a causally convex region $O$ of spacetime $\Mb$
(there seems to be no obstruction to generalizing the background category if desired). Two ignorance principles
apply: ignorance of whether the spacetime extends at all beyond $O$, and ignorance of what
the metric might be outside $O$. According to the ignorance meta-principle, the local
content associated with $O$ ought to be independent of each factor, and suggests
two characterizations. 

First, we may consider the local physical content to be the full content we would have
if the spacetime were coterminous with $O$. To quantify this idea, 
let $\iota_O$ be the embedding $O\hookrightarrow \Mb$, which induces a $\Loc$-morphism
$\iota_O:\Mb|_O\to\Mb$, where $\Mb|_O$ is the set $O$ with the metric and causal
structure induced from $\Mb$.  
Then the functor describing the theory assigns a morphism
$\Af(\iota_O):\Af(\Mb|_O)\longrightarrow \Af(\Mb)$
whose image $\Af^\kin(\Mb;O)$ is called the \emph{kinematic subalgebra} of $\Af(\Mb)$ corresponding to $O$. The correspondence $O\mapsto \Af^\kin(\Mb;O)$  defines
a net of local algebras with properties generalizing those of 
algebraic quantum field theory~\cite{Haag} to curved spacetime~\cite{BrFrVe03}. 

Alternatively, instead of ignoring the background outside $O$ altogether, one could
take the view that it might be altered, and that this 
should have no impact on the physics within $O$.
We employ the relative Cauchy evolution as a quantitative measure 
of the response of such observables to changes in the geometry. For any compact subset $K$ of $\Mb$, set
\[
\Af^\bullet(\Mb;K) := \{A\in \Af(\Mb): \rce_\Mb[\hb]A = A~\text{for all $\hb$ supp in $K^\perp$}\}
\]
where $K^\perp = \Mb\setminus J_\Mb(K)$ is the causal complement of $K$. The \emph{dynamical algebra} $\Af^\dyn(\Mb;O)$ 
associated with an open causally convex region $O$ is then the
subalgebra of $\Af(\Mb)$ generated by the $\Af^\bullet(\Mb;K)$
as $K$ ranges over a suitable set of compact subsets of $O$ (see
 \cite[\S 5]{FewVer:dynloc_theory} for details). 
 
The reader might wonder why only metric perturbations in the
causal complement are considered instead of arbitrary perturbations outside $O$.
The reason for the restriction lies in our use of the relative Cauchy evolution. As mentioned in Section~\ref{sec:rce}, the r.c.e.\ uses the time-slice property to fix an equivalent description of an observable in the causal future of the
perturbation region. It is this initially equivalent description that is held
fixed during the perturbation, rather than the original description
of the observable in $O$, so such observables are not
in general invariant under the relative Cauchy evolution metric perturbations supported in $J_\Mb(O)$. However, the relative
Cauchy evolution induced by perturbations in the causal complement does
provide a usable test of stability under background perturbations. 

If the kinematic and dynamical descriptions agree,
i.e., $\Af^\kin(\Mb;O)=\Af^\dyn(\Mb;O)$ for all open
causally convex subsets $O$ of $\Mb$ with finitely many connected components, the theory is said to be \emph{dynamically local}. This property is known to 
hold for the following quantized theories:
\begin{itemize}
\item the free Klein--Gordon field $(\Box+m^2+\xi R)\phi=0$ in dimensions $n\ge 2$ provided either the mass $m$ or curvature coupling $\xi$ is nonzero~\cite{FewVer:dynloc2,Ferguson:2013}, and 
the corresponding extended algebra of Wick polynomials for
$m>0$ at least for minimal or conformal coupling~\cite{Ferguson:2013}
[there is no reason to expect failure for other values of $\xi$];
\item the free massless current in dimensions $n\ge 2$ (restricting to connected spacetimes) or $n\ge 3$ (allowing disconnected spacetimes)
~\cite{FewVer:dynloc2};
\item the inhomogeneous minimally coupled Klein--Gordon field, 
for $m\ge 0$, $n\ge 2$ -- here one adapts the setting
to $\LocSrc$ by defining the dynamical algebras to be those invariant
under all background perturbations (metric and source) 
in the causal complement~\cite{FewSch:2014};
\item the free Dirac field with mass $m\ge 0$~\cite{Ferguson_PhD};
\item the free Maxwell field in dimension $n=4$, in a 
`reduced formulation' \cite{FewLang_Maxwell:2014}.
\end{itemize}
The known cases in which dynamical locality fails are:
the free Klein--Gordon field with $m=0$, $\xi=0$ in dimensions $n\ge 2$,
which may be traced to the rigid $\phi\mapsto\phi+\text{const}$ gauge symmetry~\cite{FewVer:dynloc2}; the free massless current in $2$-dimensions allowing disconnected spacetimes~\cite{FewVer:dynloc2};
and the free Maxwell field in dimension $n=4$, in a 
`universal formulation' \cite{FewLang_Maxwell:2014}. 
The main difference between the two Maxwell formulations is that 
the universal formulation allows for topological electric and magnetic
charges in spacetimes with nontrivial second de Rham cohomology, whereas the reduced formulation does not. In fact, the topological
charges also fail to satisfy the injectivity property normally required
of a locally covariant theory -- this reflects their nonlocal nature, of course (see \cite{SandDappHack:2012,BecSchSza:2014} for more discussion of the injectivity issue in related models). The emerging pattern is that dynamical locality can
be expected to fail where a theory admits a broken rigid
gauge symmetry, or has charges stabilized by topological or other
constraints; otherwise, it appears to be reasonable to expect
dynamical locality to hold -- it also seems that even trivial
`interactions' such as a mass or curvature coupling are sufficient
to restore dynamical locality. With the exception of the example of the 
massless current in disconnected spacetimes, all known failures of dynamical locality are related to the 
existence of elements that are invariant under arbitrary relative Cauchy evolution;
conceivably the treatment of the massless current might be modified to restore dynamical locality.

\section{The same physics in all spacetimes (SPASs)}

We have seen that the locally covariant approach provides
a criterion for whether two theories represent the same physics
as each other, namely, the existence or otherwise of a natural
isomorphism between the corresponding functors. In
this section, we turn to the question of what can be said about
whether an individual theory is one that represents the same
physics in different spacetimes (see~\cite{FewVer:dynloc_theory}, and~\cite{FewsterRegensburg}
for a summary). 
This is a foundational question for theories of physics in 
curved spacetimes, but one that does not seem to have
been addressed in any axiomatic way before. For theories
defined by a Lagrangian, one may of course write down `the same' Lagrangian in different spacetimes (although there can be
subtleties with this~\cite{FewsterRegensburg}) -- what we
want to understand is on what grounds one might declare that
this is a sound procedure. There are a number of difficulties: 
we lack a definition of `the same physics' and it is unclear 
whether there might be many possible definitions, or indeed whether
any suitable definition exists. We also face the question of
how one should formulate a notion of SPASs mathematically. 

The last question is the easiest to answer: we may represent any possible
definition of SPASs  extensionally by the class of theories that obey it. We then reason as follows:  Suppose $\mathfrak{T}$ is a class of locally covariant theories representing a definition of SPASs,
that $\Af$ and $\Bf$ are theories in $\mathfrak{T}$ and there is a spacetime $\Mb$ in which 
$\Af$ and $\Bf$ represent identical physics. Then, because the physical content of $\Af$ 
is assumed to be the same across all spacetimes, and the same is assumed of $\Bf$ 
(according to a common notion, expressed by $\mathfrak{T}$) the two theories
ought to coincide in every spacetime. The spirit of this argument may be captured 
mathematically as follows: 
\begin{definition} A class of theories $\mathfrak{T}$ has the \emph{SPASs property} if, 
whenever $\Af$ and $\Bf$ are theories in $\mathfrak{T}$, 
such that a natural $\zeta:\Af\nto\Bf$ embeds $\Af$ as a subtheory
of $\Bf$, and there is a spacetime $\Mb$ in which $\zeta_\Mb$ is an isomorphism, then $\zeta$ is a natural isomorphism 
making $\Af$ and $\Bf$ equivalent. 
\end{definition}
Note that the SPASs property is intended as a necessary criterion for $\mathfrak{T}$ to be a satisfactory notion of SPASs. Perhaps surprisingly, the collection of all locally covariant theories, $\LCT$, does not have the SPASs property. Here, we assume $\Phys$ has a monoidal structure understood as composition
of independent systems, so that any $\Af\in\LCT$ can be `doubled' to give 
$\Af^{\otimes 2}\in\LCT$ by 
\[
\Af^{\otimes 2}(\Mb) = \Af(\Mb)\otimes \Af(\Mb),\qquad
\Af^{\otimes 2}(\psi) = \Af(\psi)\otimes \Af(\psi)
\]
for all objects $\Mb$ and morphisms $\psi$ of $\Loc$. 
Let us assume that $\Af$ is inequivalent to $\Af^{\otimes 2}$.\footnote{This holds, for example, if $\Af$ is the theory of a free massive scalar field, because $\Af$ has automorphism group $\ZZ_2$, 
while $\Af^{\otimes 2}$ has automorphism group $\mathrm{O}(2)$~\cite{Fewster:gauge}.} Then we may form a new theory 
$\Bf\in\LCT$ by
\[
\Bf(\Mb) = \begin{cases} \Af(\Mb) & \Sigma_\Mb~\text{noncompact}\\
 \Af(\Mb)^{\otimes 2} & \Sigma_\Mb~\text{compact}\end{cases}~  
\Bf(\psi) A =\begin{cases} \Af(\psi) A & \Sigma_\Nb~\text{noncompact} \\
\Af(\psi)^{\otimes 2} A & \Sigma_\Mb~\text{compact}\\
\Af(\psi)A\otimes\II &  \text{$\Sigma_\Nb$ compact, but not $\Sigma_\Mb$}
%A\otimes\II & \text{$\Sigma_\Mb$ noncompact} \\ & \text{and $\Sigma_\Nb$ compact}
\end{cases}
\]
for $\psi:\Mb\to\Nb$, where $\Sigma_\Mb$ denotes a 
Cauchy surface of $\Mb$. Owing to a theorem of Lorentzian geometry~\cite[Thm 1]{BILY} there are no morphisms in which $\Sigma_\Mb$ is compact, but $\Sigma_\Nb$ is noncompact --
indeed, if $\Sigma_\Mb$ is compact then it is diffeomorphic to $\Sigma_\Nb$~\cite[Prop.~A.1]{FewVer:dynloc_theory}.

The theory $\Bf$ represents one copy of $\Af$ in spacetimes with noncompact Cauchy surfaces,
but two copies in spacetimes with compact Cauchy surfaces. It is not hard to show that there are subtheory embeddings $\zeta:\Af\nto\Bf$ and $\eta:\Bf\nto \Af^{\otimes 2}$
\[
\zeta_\Mb A = 
\begin{cases}
A & \Sigma_\Mb~\text{noncompact} \\
A\otimes\II & \Sigma_\Mb~\text{compact}
\end{cases}\qquad
\eta_\Mb A = 
\begin{cases}
A & \Sigma_\Mb~\text{compact} \\
A\otimes\II & \Sigma_\Mb~\text{noncompact}
\end{cases}
\] 
and that $\zeta_\Mb$ is an isomorphism if $\Sigma_\Mb$ is noncompact, while $\eta_\Mb$ is an isomorphism if $\Sigma_\Mb$ is compact.  If $\LCT$ had the SPASs property then there would
be natural isomorphisms $\Af\cong\Bf\cong \Af^{\otimes 2}$,
contradicting the assumed inequivalence of $\Af$ and $\Af^{\otimes 2}$. 
This proves:
\begin{theorem} $\LCT$ does not have the SPASs property,
nor does any class of theories containing $\Af,\Bf,\Af^{\otimes 2}$.
\end{theorem}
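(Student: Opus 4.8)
The plan is to argue by contradiction, reusing the two subtheory embeddings $\zeta:\Af\nto\Bf$ and $\eta:\Bf\nto\Af^{\otimes 2}$ just constructed. Suppose some class $\mathfrak{T}$ with $\Af,\Bf,\Af^{\otimes 2}\in\mathfrak{T}$ had the SPASs property. The idea is that each of $\zeta$ and $\eta$ is a subtheory embedding whose component is an isomorphism on \emph{at least one} spacetime; the SPASs property would then promote each to a \emph{global} natural isomorphism. Composing the two would identify $\Af$ with $\Af^{\otimes 2}$, contradicting the standing assumption that these are inequivalent.

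First I would fix the two witnessing spacetimes. Globally hyperbolic spacetimes with noncompact Cauchy surface exist (for instance $n$-dimensional Minkowski space $\Mb_{\mathrm{nc}}$), and so do spacetimes with compact Cauchy surface (for instance an ultrastatic spacetime $\Mb_{\mathrm{c}}=\RR\times\Sigma$ over a closed $(n-1)$-manifold $\Sigma$). By the properties recorded just before the statement, $\zeta_{\Mb_{\mathrm{nc}}}$ is an isomorphism and $\eta_{\Mb_{\mathrm{c}}}$ is an isomorphism, so both subtheory embeddings satisfy the hypothesis of the SPASs property.

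Then I would apply the SPASs property twice. Applied to $\zeta$ with witness $\Mb_{\mathrm{nc}}$, it forces $\zeta$ to be a natural isomorphism, giving $\Af\cong\Bf$. Applied to $\eta$ with witness $\Mb_{\mathrm{c}}$, it forces $\eta$ to be a natural isomorphism, giving $\Bf\cong\Af^{\otimes 2}$. Composing yields a natural isomorphism $\Af\cong\Af^{\otimes 2}$, the desired contradiction. Since $\LCT$ contains $\Af$, $\Bf$ and $\Af^{\otimes 2}$, the same reasoning applies with $\mathfrak{T}=\LCT$, proving both clauses of the statement.

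The hard part is not this final contradiction, which is essentially a one-line composition, but the legitimacy of the construction it rests on: that $\Bf$ is a bona fide functor $\Loc\to\Phys$ and that $\zeta,\eta$ are genuine natural transformations. The delicate case is a morphism $\psi:\Mb\to\Nb$ whose domain and codomain disagree on the compactness of their Cauchy surfaces, where the piecewise definitions of $\Bf(\psi)$, $\zeta_\Mb$ and $\eta_\Mb$ must be shown to cohere. This is exactly where one invokes the Lorentzian-geometry input \cite[Thm~1]{BILY}: no $\Loc$-morphism can carry a compact Cauchy surface into a spacetime with noncompact Cauchy surface, so the one potentially inconsistent case never arises and the naturality squares commute. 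Granting that input, as is done above, the two applications of the SPASs property close the argument immediately.
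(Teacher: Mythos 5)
Your argument is correct and is essentially identical to the paper's: the paper likewise uses the embeddings $\zeta:\Af\nto\Bf$ and $\eta:\Bf\nto\Af^{\otimes 2}$, notes that $\zeta_\Mb$ is an isomorphism for noncompact $\Sigma_\Mb$ and $\eta_\Mb$ for compact $\Sigma_\Mb$, and applies the SPASs property twice to obtain $\Af\cong\Bf\cong\Af^{\otimes 2}$, contradicting the assumed inequivalence. Your additional remarks on the witnessing spacetimes and on why the naturality squares cohere (via the compactness result of \cite[Thm~1]{BILY}) are consistent with what the paper asserts without detailed proof.
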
 
We note that the theory $\Bf$ is just one among many potential
pathological theories that exist in $\LCT$, for which there are
general constructions~\cite[\S 4]{FewVer:dynloc_theory}.

Evidently, in order to find classes of theories that do have the SPASs condition, we need a way of excluding theories like $\Bf$. A clue
is that dynamical locality fails in $\Bf$: if $\Mb$ has compact Cauchy surfaces and $O\subset \Mb$ has nontrivial causal complement, then
\begin{align*}
\Bf^{\kin}(\Mb;O) & = \Af^\kin(\Mb;O)\otimes\II \\[0.1cm]
\Bf^\dyn(\Mb;O ) & = \Af^\dyn(\Mb;O)^{\otimes 2}.
\end{align*}
We see that $\Bf^\dyn(\Mb;O)$ captures the degrees of freedom available in the ambient spacetime, owing to the
use of the relative Cauchy evolution in its definition. This
clue turns out to be fruitful. The main result of the general 
analysis (\cite[Thm 6.10]{FewVer:dynloc_theory}) is 
\begin{theorem}\label{thm:SPASs}
The class of dynamically local theories has the SPASs property.  
\end{theorem}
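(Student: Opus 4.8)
The plan is to show that the locus $\Gc=\{\Mb:\zeta_\Mb\text{ is an isomorphism}\}$ is the whole of $\Loc$, knowing that it is nonempty. Since morphisms of $\Phys$ are monic, every component $\zeta_\Mb$ is already injective, so (working in $\Alg$ for concreteness) the entire task is to prove surjectivity of $\zeta_\Nb$ for an arbitrary $\Nb$. The two levers I would use throughout are the naturality square for $\zeta$ and the relative-Cauchy-evolution intertwining relation of Theorem~\ref{thm:rce_intertwine}, both available because $\Af$ and $\Bf$ obey the time-slice axiom (the intertwiner is used in the $\Loc$ setting, with only metric perturbations $\hb$).

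First I would record two propagation principles. \emph{(P1)} If $\psi:\Mb\to\Nb$ is a Cauchy morphism, then $\zeta_\Mb$ is an isomorphism exactly when $\zeta_\Nb$ is: naturality gives $\zeta_\Nb\circ\Af(\psi)=\Bf(\psi)\circ\zeta_\Mb$, and the time-slice axiom makes $\Af(\psi)$ and $\Bf(\psi)$ isomorphisms, so $\zeta_\Nb=\Bf(\psi)\circ\zeta_\Mb\circ\Af(\psi)^{-1}$ and the claim is immediate in both directions. \emph{(P2)} If $\zeta_\Mb$ is an isomorphism, then so is $\zeta_{\Mb|_O}$ for every causally convex $O\subseteq\Mb$; this is the conceptual core, and the one place dynamical locality is essential. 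Because $\zeta_\Mb$ is bijective and intertwines the relative Cauchy evolutions, one checks that $\zeta_\Mb(\Af^\bullet(\Mb;K))=\Bf^\bullet(\Mb;K)$ for every compact $K$: invariance of $A$ under $\rce^{(\Af)}_{\Mb}[\hb]$ is equivalent, via the intertwiner and injectivity of $\zeta_\Mb$, to invariance of $\zeta_\Mb(A)$ under $\rce^{(\Bf)}_{\Mb}[\hb]$. Passing to generated subalgebras gives $\zeta_\Mb(\Af^\dyn(\Mb;O))=\Bf^\dyn(\Mb;O)$, and dynamical locality upgrades this to $\zeta_\Mb(\Af^\kin(\Mb;O))=\Bf^\kin(\Mb;O)$. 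Feeding this into the naturality square for the inclusion $\iota_O$ and using that $\Bf(\iota_O)$ is monic forces $\zeta_{\Mb|_O}$ to be surjective, hence an isomorphism.

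Next I would prove a generation principle \emph{(P3)}: $\zeta_\Nb$ is an isomorphism provided every compact $K\subseteq\Nb$ has a causally convex neighbourhood $O$ with $\zeta_{\Nb|_O}$ an isomorphism. Reading the naturality square for $\iota_O$ the other way gives $\zeta_\Nb(\Af^\kin(\Nb;O))=\Bf^\kin(\Nb;O)$, which by dynamical locality contains $\Bf^\bullet(\Nb;K)$; letting $K$ exhaust $\Nb$ and invoking dynamical locality at $O=\Nb$, namely that $\Bf(\Nb)=\Bf^\dyn(\Nb;\Nb)$ is generated by the $\Bf^\bullet(\Nb;K)$, shows the image of $\zeta_\Nb$ is all of $\Bf(\Nb)$. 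With P1, P2 and P3 in hand the residue is purely geometric. Here I would invoke the Fulling--Narcowich--Wald deformation argument~\cite{BrFrVe03} to conclude that membership in $\Gc$ depends only on the diffeomorphism type of the Cauchy surface (spacetimes sharing a Cauchy-surface type being linked by chains of Cauchy morphisms, to which P1 applies), and use P2 restricted to small diamonds to obtain the ball Cauchy-surface case from the given element of $\Gc$.

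The hard part, and the step I expect to be the main obstacle, is to promote the ball case to arbitrary Cauchy topology. P3 reduces this to producing, for each compact $K\subseteq\Nb$, a causally convex neighbourhood on which $\zeta$ is already an isomorphism; but when $K$ meets a Cauchy surface in a topologically nontrivial set no single diamond neighbourhood suffices, and one is driven to an inductive gluing over a cover of $K$ by diamonds. Establishing that $\zeta$ is an isomorphism on a union of regions from knowledge on the pieces and their overlaps is a Mayer--Vietoris-type statement for the functor, and closing it cleanly --- presumably by an induction on a handle decomposition of the Cauchy surface together with the additivity of the dynamical net that dynamical locality supplies (cf.~\cite{FewVer:dynloc_theory}) --- is where the substantive work lies.
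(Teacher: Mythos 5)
Your skeleton --- the three propagation principles P1--P3 driven by the naturality squares and the relative Cauchy evolution intertwiner of Theorem~\ref{thm:rce_intertwine} --- is essentially identical to the actual argument of \cite[Thm~6.10]{FewVer:dynloc_theory}, to which the paper delegates the proof: one shows that the set of spacetimes on which $\zeta$ is an isomorphism is stable under restriction to causally convex subsets (your P2, using $\zeta_\Mb\Af^\bullet(\Mb;K)=\Bf^\bullet(\Mb;K)$ and dynamical locality), stable under Cauchy morphisms in both directions (your P1), and that surjectivity of $\zeta_\Nb$ follows once the image contains a generating family of kinematic algebras (your P3). The deformation step reducing to diamonds with Cauchy-ball bases is also as in the reference.

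The genuine gap is the final step, and your diagnosis of it is wrong in an instructive way. Your P3 asks for \emph{every} compact $K\subseteq\Nb$ to have a causally convex neighbourhood on which $\zeta$ is already an isomorphism; that is unattainable (take $K$ to be a compact Cauchy surface) and it is also more than is needed. The definition of the dynamical algebra already does the work you are trying to do by hand: $\Bf^\dyn(\Nb;\Nb)$ is generated by the $\Bf^\bullet(\Nb;K)$ only as $K$ ranges over the \emph{restricted} class of compacta admitting a neighbourhood that is a finite union of causally disjoint diamonds with Cauchy-ball bases (the ``suitable set of compact subsets'' in Section~\ref{sec:dynloc}). Dynamical locality applied at $O=\Nb$ then says that the kinematic algebras of such multi-diamonds generate all of $\Bf(\Nb)$. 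Each multi-diamond restricts to a disjoint union of ball-based diamonds, each of which lies in your set of good spacetimes by P1, P2 and the deformation argument (one must allow disconnected objects in $\Loc$, or state P2 for $O$ with finitely many components, to cover the disjoint unions). So P3, applied only to these $K$, already closes the proof: no Mayer--Vietoris gluing, no induction over a handle decomposition, and no statement about $\zeta$ on neighbourhoods of topologically nontrivial compacta is ever required, because surjectivity only needs the image of $\zeta_\Nb$ to contain a generating set, not to be assembled from isomorphisms on an open cover.
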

Aside from its intrinsic interest, this result has an application to the question
of the existence of natural states~\cite[Thm 6.13]{FewVer:dynloc_theory} (we state a simplified and slightly weaker version):
\begin{theorem} \label{thm:nogo}
Suppose $\Af:\Loc\to\Alg$ is a dynamically local theory that also obeys extended locality\footnote{
Extended locality~\cite{Schoch1968, Landau1969} requires that the kinematic algebras of spacelike separated regions
intersect only in multiples of the unit. The Reeh--Schlieder property 
holds if the GNS vector corresponding to $\omega_\Mb$ is cyclic for the induced representation of 
$\Af(\Mb|_O)$ for all open, relatively compact, connected causally convex $O$ -- it 
is a standard feature of QFT in Minkowski space~\cite{StreaterWightman,Haag}
 and has also been proved for the free scalar field in some curved spacetime situations~\cite{StrVerWol:2002}.} and admits a natural state $(\omega_\Mb)_{\Mb\in\Loc}$. 
If, in Minkowski space $\Mb_0$, the state $\omega_{\Mb_0}$ induces a 
faithful GNS representation of $\Af(\Mb_0)$ with the Reeh--Schlieder property,  then $\Af$ is equivalent to the trivial theory 
that assigns the trivial unital $*$-algebra $\CC$ to every spacetime. 
\end{theorem}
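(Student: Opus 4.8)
The plan is to reduce everything to the single statement $\Af(\Mb_0)=\CC\II$ in Minkowski space, and then to propagate triviality to all spacetimes by invoking the SPASs property of dynamically local theories (Theorem~\ref{thm:SPASs}). The starting observation is that a natural state is automatically invariant under relative Cauchy evolution. Writing the defining Cauchy morphisms $i^\pm:\Mb^\pm\to\Mb$ and $j^\pm:\Mb^\pm\to\Mb[\hb]$, naturality of the state gives $\omega_\Mb\circ\Af(i^\pm)=\omega_{\Mb^\pm}=\omega_{\Mb[\hb]}\circ\Af(j^\pm)$; feeding these four relations through the four factors of $\rce_\Mb[\hb]$ makes the intermediate states telescope, leaving $\omega_\Mb\circ\rce_\Mb[\hb]=\omega_\Mb$ for every admissible $\hb$.

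I would next pass to the GNS representation $(\pi,\HH,\Omega)$ of the Minkowski natural state $\omega_{\Mb_0}$, which by hypothesis is faithful and Reeh--Schlieder. Since each automorphism $\rce_{\Mb_0}[\hb]$ preserves $\omega_{\Mb_0}$, it is unitarily implemented by some $U[\hb]$ with $U[\hb]\Omega=\Omega$ and $U[\hb]\pi(A)U[\hb]^*=\pi(\rce_{\Mb_0}[\hb]A)$. Now fix an open, relatively compact, connected, causally convex region $O$ and take $\hb$ supported in the causal complement $O^\perp$. Because $O^\perp\subset K^\perp$ for every compact $K\subset O$, every generator of $\Af^\dyn(\Mb_0;O)$ drawn from the various $\Af^\bullet(\Mb_0;K)$ is $\rce_{\Mb_0}[\hb]$-invariant for this single $\hb$; since the fixed-point set of an automorphism is a subalgebra, all of $\Af^\dyn(\Mb_0;O)$ is then $\rce_{\Mb_0}[\hb]$-invariant, so $U[\hb]$ commutes with $\pi(\Af^\dyn(\Mb_0;O))$. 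Dynamical locality identifies this with $\pi(\Af^\kin(\Mb_0;O))$, for which $\Omega$ is cyclic by Reeh--Schlieder; hence $U[\hb]\pi(A)\Omega=\pi(A)U[\hb]\Omega=\pi(A)\Omega$ on a dense set, forcing $U[\hb]=\II$ and, by faithfulness, $\rce_{\Mb_0}[\hb]=\id$.

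Because any compactly supported $\hb$ is spacelike to some small diamond $O$ in the admissible class, the previous step yields $\rce_{\Mb_0}[\hb]=\id$ for all admissible $\hb$. The invariance condition defining $\Af^\bullet(\Mb_0;K)$ then becomes vacuous, so $\Af^\bullet(\Mb_0;K)=\Af(\Mb_0)$ for every compact $K$, whence $\Af^\dyn(\Mb_0;O)=\Af(\Mb_0)$ for every $O$; dynamical locality upgrades this to $\Af^\kin(\Mb_0;O)=\Af(\Mb_0)$. Choosing two spacelike-separated diamonds $O_1,O_2$ and applying extended locality gives $\Af(\Mb_0)=\Af^\kin(\Mb_0;O_1)\cap\Af^\kin(\Mb_0;O_2)=\CC\II$. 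Finally, the trivial theory assigning $\CC$ to every spacetime is (vacuously) dynamically local, and the unit inclusions furnish a subtheory embedding $\zeta$ of it into $\Af$ whose Minkowski component $\zeta_{\Mb_0}:\CC\to\Af(\Mb_0)=\CC\II$ is an isomorphism; the SPASs property of Theorem~\ref{thm:SPASs} then makes $\zeta$ a natural isomorphism, so $\Af$ is equivalent to the trivial theory.

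I expect the delicate point to be the second paragraph: one must verify that $U[\hb]$ genuinely commutes with the whole of $\Af^\dyn(\Mb_0;O)$ — the key being that a \emph{single} $\hb\subset O^\perp$ simultaneously fixes all generators coming from the different $\Af^\bullet(\Mb_0;K)$ with $K\subset O$ — and that the geometric claim ``every compact set is spacelike to some admissible diamond'' holds within the Reeh--Schlieder class of regions. The use of faithfulness (to descend from $U[\hb]=\II$ to $\rce_{\Mb_0}[\hb]=\id$) and of extended locality (to collapse two full copies of $\Af(\Mb_0)$ to $\CC\II$) are the two hypotheses that cannot be dispensed with, and tracking exactly where each enters is the main bookkeeping task.
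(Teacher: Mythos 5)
Your proposal is correct and follows essentially the same route as the paper, which only sketches the argument (trivial theory is a subtheory of $\Af$; the hypotheses force $\Af(\Mb_0)=\CC\II$ in Minkowski space; Theorem~\ref{thm:SPASs} then upgrades the unit inclusion to a natural isomorphism) and delegates the Minkowski-triviality step to \cite[Thm 6.13]{FewVer:dynloc_theory}. Your filling-in of that step --- invariance of the natural state under relative Cauchy evolution, unitary implementation with invariant GNS vector, commutation with $\pi(\Af^\dyn(\Mb_0;O))=\pi(\Af^\kin(\Mb_0;O))$, Reeh--Schlieder and faithfulness giving $\rce_{\Mb_0}[\hb]=\id$, and extended locality collapsing $\Af(\Mb_0)$ to $\CC\II$ --- is the intended argument and is sound.
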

Sketch arguments
for nonexistence of natural states of the real scalar field appear in \cite{Ho&Wa01,BrFrVe03};
however, Theorem~\ref{thm:nogo} was the first complete argument and, moreover, 
applies to a general class of theories, including those listed in Section~\ref{sec:dynloc}.  Its proof makes use of Theorem~\ref{thm:SPASs}
-- the trivial theory is certainly a subtheory of $\Af$,  and the hypotheses entail that these theories coincide in Minkowski space; hence, 
as both $\Af$ and the trivial theory are dynamically local, the result is proved.

\section{Concluding remarks}

This paper has described the framework of locally covariant physical theories, providing
a motivation in terms of `ignorance principles'. This framework has opened up
axiomatic QFT in curved spacetime: in addition to the results discussed above, there is a general spin-statistics
theorem~\cite{Verch01} and versions of the Reeh--Schlieder theorem~\cite{Sanders_ReehSchlieder}, Haag duality~\cite{Ruzzi_punc:2005} and the split property~\cite{Few_split:2015},  
the global gauge group is understood~\cite{Fewster:gauge} and superselection sectors
have been investigated~\cite{Br&Ru05}. Moreover, the underlying ideas play an important role
in the perturbative construction of interacting models in curved spacetime~\cite{BrFr2000,Ho&Wa01}
among other applications. Although the formalism allows us to begin to address the issue of SPASs,
more can be done in this direction: while the class of dynamically local theories 
has the SPASs property, it is unknown whether it may contain pathological theories that
should be ruled out by further axioms.

As far as natural states are concerned, the general result of Theorem~\ref{thm:nogo} 
shows that they cannot be expected in reasonable models of QFT; 
we have also given a relatively straightforward proof of this for the inhomogeneous scalar field.
As with any no-go theorem, one can always seek to by-pass the hypotheses. The obvious condition to
drop is the requirement that the preferred state depend locally on the background, and indeed a proposal
of this type has been discussed recently for the scalar field~\cite{AAS}. However,  the states constructed
turn out not to be Hadamard and to have a number of other defects~\cite{CF-RV-ultraSJ,CF-RV-morebadnews} (see also~\cite{FewLan_Dirac:2014} for related discussion for Dirac fields).
While there is an ingenious modification of~\cite{AAS} that does yield
Hadamard states --- see ~\cite{BruFre:2014} (and ~\cite{FewLan_Dirac:2014} in the Dirac case) 
this is achieved at the expense of introducing a whole 
family of states, none of which is canonically preferred. 
The no-go theorem is not so easily evaded.

\medskip
%\section*{Acknowledgment}

\noindent{\small I thank the organizers and participants of the workshop
\emph{New Geometric Concepts in the Foundations of Quantum Physics}
(Chicheley Hall, November 2013) for stimulating discussion and comments, and also the Royal Society for  funding the workshop. I also thank Francis Wingham and the referees for their careful reading of the manuscript.}

%%%%%%%%%% Insert bibliography here %%%%%%%%%%%%%%
{\small
\providecommand{\newblock}{}
}
%
%\bibliographystyle{iopart-num_mod}
%\bibliography{covariant}

\end{document}